\documentclass{article}
\usepackage{cite,tikz,mdframed}
\usepackage{amsmath,amsthm,amssymb,tikz,mdframed,bookmark}

\newtheorem{definition}{Definition} 
\newtheorem{lemma}{Lemma}
\newtheorem{example}{Example}
\newtheorem{theorem}{Theorem}
\newtheorem{corollary}{Corollary}
\newtheorem{observation}{Observation}
%

\newcommand{\AND}{\wedge}
\newcommand{\OR}{\vee}

\usepackage{array}
\usepackage{url}

\bibliographystyle{plainurl}

\title{Kernelization, Proof Complexity and Social Choice} 


\author{Gabriel Istrate, Cosmin Bonchi\c{s} and Adrian Cr\~{a}ciun\thanks{West University of Timi\c{s}oara, Romania. Corresponding author email: gabrielistrate@acm.org}}

\begin{document}
\maketitle

\begin{abstract}
We display an application of the notions of kernelization and data reduction from parameterized complexity to proof complexity: Specifically, we show that the existence of data reduction rules for a parameterized problem having (a). a small-length reduction chain, and (b). small-size (extended) Frege proofs certifying the soundness of reduction steps  implies the existence of subexponential size (extended) Frege proofs for propositional formalizations of the given problem. 

We apply our result to infer the existence of subexponential Frege and extended Frege proofs for a variety of problems. Improving earlier results of Aisenberg et al. (ICALP 2015), we show that propositional formulas expressing (a stronger form of) the Kneser-Lov\'asz Theorem have polynomial size Frege proofs for each constant value of the parameter $k$. Previously only quasipolynomial bounds were known (and only for the ordinary Kneser-Lov\'asz Theorem).  

Another notable application of our framework is to impossibility results in computational social choice: we show that, for any fixed number of agents, propositional translations of the Arrow and Gibbard-Satterthwaite impossibility theorems have subexponential size Frege proofs. 
\end{abstract}


%
\section{Introduction} 

The central task of \emph{proof complexity} \cite{buss2019proof,krajivcek2019proof} is that of understanding (and distinguishing) the relative power of various propositional proof systems. Proving lower bounds for stronger and stronger proof systems might (in principle) be a way to eventually confirm the various conjectures of computational complexity. Yet we are far from being able to prove exponential lower bounds for some concrete problems in strong proof systems. 

One of the most important open problems in this area, explicitly raised by Bonet, Buss and Pitassi  \cite{bonet1995there} is that of separating the complexity of Frege proof systems ("textbook style propositional proofs") from that of extended Frege proof systems (which in addition can introduce new variables as substitutes for arbitrary propositional formulas). That is, we would like to find explicit classes of propositional formulas that have extended Frege proofs of polynomial size  but have exponential lower bounds on the size of the shortest Frege proofs. 

Many classes of problems that are candidates for separating the two systems have been proposed, e.g.  statements based on linear algebra \cite{soltys2004proof,hrubes2015short}, propositional encodings of the Paris-Harrington independence results \cite{carlucci2011paris}, Ramsey's theorem \cite{lauria2017complexity}, central theorems from extremal combinatorics \cite{nozaki2008polynomal,aisenberg2016quasipolynomial}, or the Kneser-Lov\'asz formula from combinatorial topology \cite{istrate2014proof,aisenberg2018short}).

 So far most of the proposed examples have turned out to have sub-exponential Frege proofs (only a couple of candidate formula classes for the purported separation have been advanced, such as local improvement principles \cite{kolodziejczyk2011provably}, or truncations of the octahedral Tucker lemma \cite{aisenberg2018short}). On the other hand many of the tractability results listed above have  been obtained using techniques that are highly problem-specific, with relatively little transferability to more general classes of formulas. The existence of such general methods would be highly desirable: such general results could guide the search for examples witnessing the desired separation by pointing to structural properties one needs to avoid in order to construct them. 
  
The purpose of this paper is to present a more general approach for proving sub-exponential upper bounds for the Frege and extended Frege proof complexity of some classes of propositional formulas. We point out that concepts from the theory of parameterized complexity \cite{downey2013fundamentals},   specifically those of \emph{data reduction} and \emph{kernelization}  \cite{fomin2019kernelization} may be relevant to proof complexity as well\footnote{This is not the first time a connection between parameterized complexity and proof complexity was made; see e.g. \cite{dantchev2011parameterized,beyersdorff2012parameterized}. However, our concerns are rather different.}. We give a metatheorem that translates a data reduction for the original problem whose soundness can be witnessed by polynomial size (extended) Frege proofs into subexponential proofs for the corresponding propositional translation of the original problem. The exact size of these proofs is controlled by three factors: the length of the data reduction chain, the nature and size of the proofs witnessing the soundness of the reduction rules, and the size of proofs of unsatisfiability for the formulas in the kernel. 

We give several applications of our metatheorem. The use of kernelization techniques does not only allow to tackle the complexity of new problems, but also to improve existing results: In  \cite{aisenberg2018short} it was shown that propositional formulas $Kneser_{n,k}$ expressing a principle from topological combinatorics known as 
\emph{the Kneser-Lov\'asz theorem} have, for every fixed value of parameter $k$, quasipolynomial size Frege proofs. \textbf{We improve this result in several ways:} first, our result deals with a result stronger than the Kneser-Lov\'asz theorem, known as Schrijver's theorem. Second, \textbf{we get polynomial (instead of quasipolynomial) upper bounds for Frege proofs}. Other applications of our metatheorem concern several  (mostly graph-theoretic) problems whose kernelization had previously been studied in the theory of parameterized algorithms. The problems we study are well-known examples satisfying two conditions: First, their negative instances have natural formulations as unsatisfiable CNF formulas. Second, they have efficient kernelizations, often with a small kernel.  The problems we have chosen illustrate an important point: \emph{several techniques used in the literature to prove the existence of a kernelization can often be efficiently simulated by (extended) Frege proofs.} Perhaps the most interesting set of applications of our general metatheorem comes, however, from the theory of computational social choice \cite{brandt2016handbook}: as it was recently observed, various propositional formalizations of impossibility principles in the theory of computational social choice have computer-assisted proofs that reduce the task of mathematically proving these theorems to the verification of a finite number of cases of unsatisfiability of propositional formulas (using SAT solvers; see \cite{geist2017computer} for a fairly recent overview)\footnote{A similar phenomenon had been independently uncovered for the Kneser-Lov\'asz theorem \cite{aisenberg2018short}.}. \textbf{We obtain subexponential upper bounds on the complexity of Frege proofs for propositional formulations of  Arrow's theorem and the Gibbard-Satterthwaite theorem: quasipolynomial in general, polynomial for a fixed number of agents.}

The outline of the paper is as follows: in Section~\ref{sec:prelim} we review some useful notions and concepts. Our main result is stated and proved in Section~\ref{sec:main}. We then apply it to a couple of problems in Combinatorial Topology and Parameterized Algorithms. Section~\ref{sec:choice} discusses applications to Computational Social Choice. We conclude with several discussions and open problems. 
 \textbf{Unless stated otherwise, proofs are given in the Appendix.} Also, the logical formalizations of results claiming polynomial size (extended) Frege proofs are only sketched. 
\section{Preliminaries}
\label{sec:prelim} 
We assume basic familiarity with concepts from three distinct areas: proof complexity, parameterized algorithms and computational social choice. We refer the reader to \cite{krajivcek2019proof, fomin2019kernelization,brandt2016handbook} for book-length treatments of these topics. Nevertheless, for  purposes of readability we review a couple of relevant notions in the sequel: 

\begin{definition}
  \emph{Frege proof systems} are sound and complete propositional proof systems having a finite number of axioms and inference rules. All Frege proof systems are equivalent up to polynomial transformations \cite{cook1979relative}. Therefore, for concreteness, we will employ a standard "textbook proof style" system having \emph{modus ponens} as the unique inference rule. 
  
 An \emph{extended Frege proof} augments Frege proofs by allowing  new variables to substitute  complex formulas. In both cases we measure the length of a proof by the number of clauses in it. Thus the effect of the extension rule in extended Frege proofs is reducing proof length. 
\end{definition} 

\newpage
\begin{mdframed} 
When we will refer to the proof complexity of witnessing an implication $\Phi_{1}\vdash \Phi_{2}$ using (extended) Frege proofs, what we mean is that there are distinct sets of variables $X,Y$ for $\Phi_{1},\Phi_{2}$ and substitutions $Y_{i}=\Xi_{i}[X], i=1,2$ such that one can derive the clauses of $\Phi_{2}[Y]$ using the clauses of $\Phi_1[X]$ as axioms. Since they use substitutions, these are extended Frege proofs. To convert them into Frege proofs one needs to expand the definitions of new variables.  
\end{mdframed}

We use the shorthand $[m]$ for the set $\{1,2,\ldots, m\}$, $[i:j]$ for $\{i,i+1,\ldots, j\}$, and write $A\cong B$ when sets $A,B$ have the same cardinal. Function  $f(\cdot)$ is called \emph{quasipolynomial} if there exists $k>0$ such that $f(n)=O(2^{O(\log^{k}(n)})$. 
We will need the following simple

\begin{lemma} 
Suppose $C$ is a CNF formula and $Z_{1},\ldots, Z_{m}$ are literals s.t. $C\AND (Z_{1}\wedge Z_{2}\wedge \ldots \wedge Z_{m})$ is unsatisfiable, as witnessed by a resolution (Frege) proof of length $k$. Then one can derive from $C$ clause $\overline{Z_{1}}\vee  \overline{Z_{2}}\vee \ldots \vee \overline{Z_{m}}$ via a resolution (Frege) proof of size at most $k$. 
\label{foo-size} 
\end{lemma}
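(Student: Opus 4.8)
The plan is to prove a size-preserving instance of the deduction theorem: discharging the hypotheses $Z_{1},\ldots,Z_{m}$ costs nothing because they are literals rather than arbitrary formulas. Abbreviate $W:=\overline{Z_{1}}\vee\cdots\vee\overline{Z_{m}}$. The semantic content is trivial --- $C\AND Z_{1}\AND\cdots\AND Z_{m}$ being unsatisfiable is just $C\models W$ --- so everything is in the length bound, and the argument will be an explicit line-by-line surgery on the given refutation.

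For the resolution version, regard the given refutation $\pi=(D_{1},\ldots,D_{k})$, $D_{k}=\square$, as a resolution derivation of the empty clause from the clause set $C\cup\{(Z_{1}),\ldots,(Z_{m})\}$, where $(Z_{i})$ denotes a unit clause. I would build a derivation $\pi^{\sharp}$ from $C$ by replacing each $D_{j}$ with
\[
D_{j}^{\sharp}\;:=\;D_{j}\;\cup\;\{\,\overline{Z_{i}}\;:\;\text{the unit axiom }(Z_{i})\text{ is an ancestor of }D_{j}\text{ in }\pi\,\},
\]
deleting any literal whose complement is also present, so that $D_{j}^{\sharp}$ is a genuine non-tautological clause. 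Then I would check, by a short case analysis on the justification of $D_{j}$: (i) an input clause $D_{j}\in C$ is left untouched and is still an input clause; (ii) an input unit $D_{j}=(Z_{i})$ is deleted from the proof, since it has no surviving role; (iii) a resolution step, say $D_{j}$ resolved from $D_{a}=E\vee x$ and $D_{b}=F\vee\overline{x}$, is re-derived by resolving $D_{a}^{\sharp}$ and $D_{b}^{\sharp}$ on the same pivot $x$, the only added literals being of the form $\overline{Z_{i}}$, which (after the tautology-deletion convention) never clash with the pivot; (iv) the subcase in which $D_{a}$ or $D_{b}$ is itself a unit axiom $(Z_{i})$ collapses to a mere copy of the other premise's transform, since $\overline{Z_{i}}$ already occurs there, so it adds no line. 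Hence $\pi^{\sharp}$ has at most $k$ lines, and its last line $D_{k}^{\sharp}$ is the disjunction of those $\overline{Z_{i}}$ with $(Z_{i})$ used in $\pi$, a subclause of $W$ that equals $W$ once we note (as we may assume) that every $Z_{i}$ actually occurs in $\pi$.

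The Frege version follows the same pattern --- either by the identical surgery, now unobstructed because Frege systems prove weakening and the needed disjunction manipulations by constant-size derived rules, or simply by invoking the deduction theorem together with the remark that discharging a conjunction of $m$ literals blows the proof up by at most $O(m)$ additional lines. The step I expect to demand the most care is the inference-preservation check in (iii)--(iv): one must be sure the propagated literals $\overline{Z_{i}}$ never collide with an active pivot and that the ancestor bookkeeping really does keep the line count from growing. The only genuinely annoying corner cases --- a hypothesis $Z_{i}$ that is never used in $\pi$, or the wish to land exactly on $W$ rather than a proper subclause of it --- are absorbed by the ``without loss of generality'' remark above, and, if one insists, by a single weakening step, which is harmless in the proof systems under consideration.
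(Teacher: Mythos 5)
The paper does not actually supply a proof of this lemma (it is stated as a ``simple'' fact and no argument appears in the Appendix), so there is nothing to compare against; what you propose is the standard deduction-theorem argument for resolution, and it is essentially the right argument.

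There is, however, one genuine wrinkle in your write-up. The ``tautology-deletion convention'' is where the argument actually has a gap, not where a gap is closed. It is perfectly possible for the added literals to collide: if $(Z_{i})$ is an ancestor of $D_{a}$ while the literal $Z_{i}$ happens to sit inside the \emph{other} premise $D_{b}$ (or re-enters $D_{a}$ via some clause of $C$), then the resolvent of $D_{a}^{\sharp}$ and $D_{b}^{\sharp}$ contains both $Z_{i}$ and $\overline{Z_{i}}$. In that case the clause you actually \emph{derive} is tautological, whereas your $D_{j}^{\sharp}$ has been declared non-tautological by fiat, so the resolution step does not land on $D_{j}^{\sharp}$. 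Nor can you fix this by silently erasing one of the complementary literals from an earlier $D^{\sharp}$: erasing the added $\overline{Z_{i}}$ changes the clause you are obligated to prove downstream, and erasing the original $Z_{i}$ may delete a needed pivot. Either deletion threatens to invalidate a later inference.

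The clean repair is simply not to delete anything. Allow the transformed derivation to contain tautological clauses --- resolution is sound on them and nothing is gained or lost by permitting them --- and instead maintain the invariant that the clause $E_{j}$ you derive at step $j$ satisfies $D_{j}\subseteq E_{j}\subseteq D_{j}\cup\{\overline{Z_{1}},\ldots,\overline{Z_{m}}\}$. The lower inclusion guarantees the pivot literal is always present when you need it; the upper inclusion guarantees the last line $E_{k}$ is a subclause of $W$. The ancestor-based $D_{j}^{\sharp}$ you define is one valid choice for $E_{j}$ provided you leave tautologies in place, and the line count is $\le k$ exactly as you say, since unit-axiom lines are dropped and resolution lines are preserved one-for-one (with unit-axiom premises replaced by a copy of the surviving premise). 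Your case analysis (i)--(iv) is otherwise correct, and the Frege side goes through as you sketch; just replace the sentence claiming the added literals ``never clash with the pivot'' by the observation that a clash merely produces a tautological intermediate clause, which is harmless.
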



\begin{definition}[Parameterized problem] 

 Let $\Sigma$ be an alphabet. $L$ is a \emph{parametrized problem} over $\Sigma^*$ iff $L \subseteq \Sigma^* \times \mathbb{N} .$ Define \textit{the support of $L$},  by $
supp(L)=\{x\in \Sigma^* | (\exists k\in \mathbb{N}): (x,k)\in L\}.$
\end{definition}

Let $L$ be a parameterized problem in co-NP. Let $\phi$ be a "canonical" reduction of $L$ to $\overline{SAT}$. When $\phi$ is clear from the context, we identify $L$ with the set of pairs 
$\phi(L):=\{(\phi(x,k),k):(x,k)\in L\}$, slightly abusing notation, and writing $L$ instead of $\phi(L)$. 

\begin{example}[Graph colorability]
 \textit{ } Let $ COL = \left\{ (G, i) \ \middle| \ \chi(G)\leq i \right\}.$
We can encode instances $(G,k)$ of $COL$ as SAT instances $(\phi(G,k),k)$ by the reduction $\phi$ informally defined by: 
\begin{description} 
\item[-] For $v\in V(G)$ and $1\leq i\leq k$ define boolean $X_{v,i}=$TRUE iff $v$ is colored with color $i$. 
\item[-] For every sets of distinct vertices $v,w\in G$ we define variable $Y_{v,w}$. The semantics is that $Y_{v,w}=TRUE$ means that $v$ and $w$ are connected by an edge. Thus, for all sets $\{v,w\}$ that correspond to an edge we add to $\phi(G,k)$ the unit clause $Y_{v,w}$. On the other hand, for sets $\{v,w\}$ that correspond to non-edges we add to $\phi(G,k)$ the unit clause $\overline{Y_{v,w}}$.
\item[-] For every $v\in V$  add $X_{v,1}\vee X_{v,2} \vee \ldots \vee X_{v,k}.$ ($v$ must be colored with one of colors $1$ to $k$")
\item[-] For $v\in V$ and $1\leq i<j\leq k$ add  $\overline{X_{v,i}}\vee \overline{X_{v,j}}.$ ("$v$ cannot be colored with both $i$ and $j$")
\item[-] For every set $v,w\in V$ and $i\in 1\ldots k$, add clause $\overline{Y_{v,w}}\vee \overline{X_{v,i}}\vee \overline{X_{w,i}}.$ ("if v and w are connected then they cannot both be colored with color $i$")
\end{description} 
\label{ex:col}
\end{example}

\begin{definition}
The Kneser-Lov\'asz theorem (see e.g. \cite{de2012course}) is a statement about the chromatic number of the following graph, $Kn_{n,k}$, parameterized by an integer $k\geq 1$: The vertex set of $Kn_{n,k}$ is ${{n}\choose {k}}$, the set of subsets of $\{1,2,\ldots, n\}$ with $k$ elements. Two sets $A,B$ represent adjacent vertices iff $A\cap B=\emptyset$. 
The Kneser-Lov\'asz theorem can be equivalently restated as $\chi(Kn_{n,k})>n-2k+1$\footnote{actually $\chi(Kn_{n,k})=n-2k+2$. However, the existence of a ($n-2k+2$)-coloring is easy \cite{de2012course}.}. 
It is expressed as a parameterized problem as follows: 
\[
L_{Kn} = \left\{({Kn}_{n}^k, i):  
 n\geq 2k >1,\ i \leq n-2k+1 \right\}.
\]
\end{definition}

Note that $L_{Kn} \subseteq \overline{COL}$, hence we can use the translation from Example~\ref{ex:col} to canonically translate $L_{Kn}$ as a set of unsatisfiable propositional formulas.   

The next problem is just the graph coloring problem, but with a different parameterization: 

\begin{definition} 
An instance of the Dual Coloring problem is a pair $(G,k)$, where $G$ is a graph with $n$ vertices and $k$ is an integer. 
To decide: is $\chi(G)\leq n-k$ ? That is, let
$DualCol=\{(G,k):\chi(G)\leq n-k \}$. 
We have $(G,k)\in DualCOL \Leftrightarrow (G,n-k) \in COL.$
For this reason the translation of $\overline{DualCOL}$ into $\overline{SAT}$ modifies the one from $\overline{COL}$ to  $\overline{SAT}$ in Example~\ref{ex:col} in an obivious way. Note also that $\overline{DualCol}$ also generalizes the Kneser-Lov\'asz theorem, since the harder part of this theorem is equivalent to $(Kn_{n,k},2k-1)\in \overline{DualCol}$.
\end{definition}

Given graph $G$, a \emph{vertex cover in $G$} is a set $S\subseteq V(G)$ such that for every edge $e=(v,w)$, $v\in S$ or $w\in S$. We denote by $vc(G)$ the size of the smallest vertex cover of $G$.

\begin{example}[Vertex Cover]
 \textit{ } Let $ \overline{VC} = \left\{ (G, i) \ \middle| \ i < vc(G) \right\}$ be the set of unsatisfiable instances of Vertex Cover. We can encode (negative) instances $(G,k)$ of $VC$ as  instances $\phi(G,k)$ of $SAT$ by the reduction $\phi$ informally defined as follows:
\begin{description} 
\item[-] For every $v\neq w\in V$, $(v,w)\in E$ add new unit clause $Y_{v,w}$ to the formula. For $(v,w)\not \in E$ add new unit clause $\overline{Y_{v,w}}$, to the formula. 
\item[-] For $v\in V(G)$ and $i\in 1\ldots k$ define boolean variable $X_{v,i}$ with the informal semantics $X_{v,i}$ is TRUE when vertex $v$ is the $i$'th vertex in a vertex cover of size $k$. To encode this semantics add to the formula, for every $v\in V$ and $i\in 1,\ldots, k$, clause
$\overline{X_{v,i}}\vee (\bigvee_{w\neq v} Y_{v,w})$. This ensures that if $v$ is chosen in the vertex cover then it covers some edge $(v,w)$. With some extra technical complications one can do away with adding these clauses. 
\item[-] For every $i=1,\ldots, k$ we add to the formula clause $\bigvee_{v\in V} X_{v,i}.$
\item[-] For every $v\neq w\in V$ and $1\leq i \leq k$ we add to the formula clause $\overline{X_{v,i}}\vee \overline{X_{w,i}}.$
\item[-] For every $v\in V$ and $1\leq i <j \leq k$ we add to the formula clause $\overline{X_{v,i}}\vee \overline{X_{v,j}}.$

\item[-] For  $v\neq w\in V$ add to the formula clause $\overline{Y_{v,w}}\vee X_{v,1}\vee \ldots \vee X_{v,k}\vee X_{w,1}\vee \ldots \vee X_{w,k}.$
\end{description} 
\end{example}

\begin{definition}[Kernelization]

 Let $L$ be a parametrized problem. A \emph{kernelization algorithm} (or, shortly, kernelization) $\mathcal{\text{Ker}}$ for the problem $L$ is an algorithm that works as follows: on input  $(x, k)$, $Ker$ outputs (in time polynomial in $|(x,k)|$ ) a  pair $(x', k'),$ such that the following are true: $(x,k) \in L \text{ iff } (x', k') \in L$, and $|x'|, k' \leq g(k)$, where $g$ is a computable function. 
 Pair $(x^{\prime},k^{\prime})$ is called \emph{the kernel of $(x,k)$}, while $g(k)$ is called \emph{the size of the kernel.} 
\end{definition}

One can convert a kernelization into an algorithm by solving kernel instances by other means (e.g. brute force). A kernelization is often the reflexive, transitive closure of a finite set of \emph{data reduction rules}: we apply the  rules as long as possible, until we are left with an instance, the kernel, to which no rule can be applied anymore.

\begin{definition}[Data reduction rule] 

 Let $L$ be a parameterized problem. A \textbf{data reduction rule for $L$} is an algorithm $\mathcal{A}$ that maps (in time polynomial in $|x|+k$) an instance 
  $(x, k)$ of $L$ to an instance $(x', k')$ such that
 $(x,k) \in L \text{ iff } (x', k') \in L$ (we say that the two instances are \emph{equivalent}, or that the reduction rule is \emph{safe}), and $|x'|\leq |x|$. 
 In practice, a data reduction rule may be well-defined only for $|x| \geq f(k)$, for some function $f(\cdot)$, as we can simply extend it to smaller instances $(x,k)$ by defining $A(x,k)=(x,k)$. All kernelizations in this paper have this nature, and we will assume this to be true for all the results we give in the sequel. 
\end{definition}

\begin{definition}[Data reduction chain] Given parameterized problem $L$ kernelizable via data reductions $(A_{1},A_{2}, \ldots, A_{r})$, a \textbf{data reduction chain} for instance $(x,k)$ of $L$ is a sequence 
$(x_{0},k_{0}), (x_{1},k_{1}), \ldots, (x_{m},k_{m})$, where $(x_{0},k_{0})=(x,k)$, $A_{t}(x_{m},k_{m})=(x_{m},k_{m})$, for all $t=1,\ldots r$ and, for all $i=1,\ldots, m$ there exists $j\in 1,\ldots, r$ such that $(x_{i},k_{i})=A_{j}(x_{i-1},k_{i-1})$. 
\end{definition} 

\begin{example}[Data reduction for  Kneser instances:]

\begin{description}
 \item[] Reductions $(Kn_n^2, a) \xrightarrow{ } (Kn_{n-1}^2, a-1)$ and $(Kn_n^3, a) \xrightarrow{ } (Kn_{n-1}^3, a-1)$  were used in \cite{istrate2014proof} to give polynomial size extended Frege upper bounds for Kneser formulas for $k=2,3$. 
 \item[-] For $k\geq 2$ there exists $N(k)\leq k^4$ such that for $n>N(k)$ $ (\mathcal{K}_n^k, a) \xrightarrow{ } (\mathcal{K}_{n-1}^k, a-1)$. This was used in \cite{aisenberg2018short} to give polynomial size extended Frege upper bounds for Kneser formulas. 
 \item[-] For $k\geq 2$ there exists $N(k)\leq k^4$ such that for $n>N(k)$ $ (Kn_n^k, a) \xrightarrow{ }  (Kn_{n-\tfrac{n}{2k}}^k, a-\tfrac{n}{2k})$. This was used in \cite{aisenberg2018short} to give quasipolynomial size Frege proofs for Kneser formulas. 
\end{description}

\end{example}

\begin{definition} 
A \textbf{crown decomposition of a graph $G$} (see e.g. Fig. \ref{fig:kneser} b.) is a decomposition of $V(G)$ into three subsets $C,H,R$, 
$C\neq \emptyset$ such that 
(1). $C$ is an independent set. (2). No vertex in $C$ is adjacent to a vertex in $R$. 
(3). There exists a matching of $H$ in $C$, i.e. a set of disjoint edges covering $H$ with the other endpoint in $C$.
\label{cd} 
\end{definition}


Given a set $S$ and $T\subseteq S$, we will denote by $S_{-T}$ the set $S\setminus T$. We will also write $S_{-a}$ instead of $S_{-\{a\}}$. 
When $S=[m]$, of course $[m]_{-T}\cong [m-|T|]$ for every $T\subseteq [m]$. 

\begin{definition} 
Given a set of $m$ objects, identified with the set $[m]$, a \emph{preference profile} is a linear ordering of $[m]$, i.e. a permutation $\pi\in S_{m}$. Given $a,b\in [m]$ we say that \emph{$a$ is preferred to $b$} (written $a <_{\pi} b$) iff $\pi^{-1}(a)<\pi^{-1}(b)$.  \textbf{Note that preferred objects are lower in the ordering.} We denote by $top(\pi)$ the object $\pi^{-1}(1)$, i.e. the object that is preferred in $\pi$ to all others. Given a preference profile $\pi$ and $T\subseteq [m]$, denote by $\pi_{-T}$ the restriction of $\pi$ to $[m]_{-T}$, and by $\pi^{+T}$ the preference profile derived from $\pi$ by making all elements $a\in T$ less preferred than any other $b\in [m]$ (with an arbitrary fixed order among them, e.g. the order induced on $[m]$ by the identical permutation).   
\end{definition} 

\begin{definition} Given a set of $m$ objects, identified with the set $[m]$ and a set of $n$ agents, a \textbf{social choice function} (SCF) is a mapping $s:S_{m}^{n}\rightarrow Z$. $Z$ is a set equal to $S_m$ (for Arrow's theorem) and to $[m]$ (for the Gibbard-Satterthwaite theorem).  A SCF is \textbf{dictatorial} if there exists $i\in [m]$ such that for all $R_{1},R_{2},\ldots, R_{n}\in S_{m}$, $s(R_{1},R_{2},\ldots, R_{n})=R_{i}$ ($s(R_{1},R_{2},\ldots, R_{n})=top(R_{i})$ for the Gibbard-Satterthwaite theorem). A SCF is \textbf{unanimous}  if whenever $a$ is preferred to $b$ in all profiles $R_{1},R_{2},\ldots, R_{n}$ then $a$ is preferred to $b$ in profile $s(R_{1},R_{2},\ldots, R_{n})$. SCF $s$ satisfies the \textbf{independence of irrelevant alternatives (IIA) axiom} if whenever $a,b\in [m]$ are two different objects and $(R_{1},R_{2},\ldots, R_{n})\in S_{m}^{n}$ and $(R^{\prime}_{1},R^{\prime}_{2},\ldots, R^{\prime}_{n})\in S_{m}^{n}$ are two vectors of preference profiles such that, for all $i=1,\ldots m$, $R_{i}$ and $R^{\prime}_{i}$ agree in their relative preference of $a$ or $b$, then $s(R_{1},R_{2},\ldots, R_{n})$ and $s(R^{\prime}_{1},R^{\prime}_{2},\ldots, R^{\prime}_{n})$ agree in their relative preference of $a$ or $b$. A SCF is \textbf{onto} iff it is onto as a function. Finally, for every pair $(R,o)$, $R=(R_1,\ldots, R_n)$ and player $1\leq i=1\leq m$, denote by $pr(i,o,R)$ the set of objects $o^{\prime}$ s.t. $R_{i}^{-1}(o)\leq R_{i}^{-1}(o^{\prime})$ (i.e. $i$ weakly prefers $o$ to $o^{\prime}$ in $R_{i}$). A SCF $s$ is \textbf{strategyproof} iff, for every strategy profile $R$, if $o$ is the outcome of preference profile $R$ then $i$ cannot misrepresent its preferences as $\pi \in S_{m},\pi \neq R_{i}$ so that the social choice for the resulting profile $s(i,R,\pi)$ is an $o^{\prime}$ that $i$ strictly prefers to $o$. 
\end{definition}  

Given an $SCF$ $W:[m]^{n}\rightarrow Z$ and $B\subseteq [m]$ we define function $W_{-B}:[m]^{n}_{-B}\rightarrow Z$ to be defined as follows: $$W_{-B}(R_{1},R_{2},\ldots, R_{n})=W(R_{1}^{+B},R_{2}^{+B},\ldots, R_{n}^{+B})_{-B}.$$
 In other words, we extend profiles $R_1,R_2,\ldots, R_n$ by making objects in $B$ less preferred than all other objects, apply $W$ on the resulting profiles, then drop objects from $B$ from the result. 

\section{Main (Meta)Theorem and Applications} 
\label{sec:main} 

In the next definition we formalize the complexity of simulating data reduction steps by (extended) Frege proofs. Clearly, we want to encode the scenario where each such step can be simulated by efficient proofs. Our main result will allow a slightly more general setting, where the safety of each reduction step can be established by a "case by case argument with a limited number of cases". This will lead not to a chain but to a \emph{tree} of logical reductions: 


\begin{definition} Given reduction rule $\mathcal{A}$ for problem $L$ and function $h(\cdot)$, \textbf{the soundness of $\mathcal{A}$ has \textbf{(extended) Frege proofs} of size $h(\cdot)$} iff there is an  integer $R\geq 1$ s.t. for every $(x,k)\not \in L$ and every step $(x_i,k_i)\rightarrow (x_{i+1},k_{i+1})$ in the reduction chain the following are true: 
\begin{itemize} 
\item There exists $r^{\prime}_{i}\leq R$, tautology $\Xi_i:= \bigvee_{t=1}^{r^{\prime}_{i}} \Xi_{i,t}$  and formulas $\eta_{i,1}, \ldots \eta_{i,r^{\prime}_{i}}$ isomorphic (up to a variable renaming) to $\Phi(x_{i+1},k_{i+1})$ s.t. for $t=1,\ldots, r^{\prime}_{i}$, $
\Phi(x_i,k_i)\AND \Xi_{i,t} \vdash \eta_{i,t}$. 
\item Proving the soundness of $\Xi_i$ and of all  reductions $
\Phi(x_i,k_i)\AND \Xi_{i,t} \vdash \eta_{i,t}$ 
can be accomplished by (extended) Frege proofs of \textbf{total} size at most $h(|\Phi(x,k)|)$.
\end{itemize}  
\end{definition}

Given this definition, our main (meta)theorem is: 

\begin{theorem} 
Let $L$ be a parameterized problem that is  kernelizable via a finite number of data reduction rules $(A_1,A_{2},\ldots, A_{r})$ with kernel size $g(\cdot)$. 
\begin{enumerate} 
\item Assume that negative instance $(x,k)$ of $L$ has a data reduction chains of length $C(x,k)$, and that the soundness of each reduction rule $A_1,A_{2},\ldots, A_{r}$ can be witnessed using extended Frege proofs of size at most $h(|\Phi(x,k)|)$, for some function $h(\cdot)$. Then $L$ has extended Frege proofs of size \[
O((\sum\limits_{i=0}^{C(x,k)} R^{i})[h(|\Phi(x,k) |)+2^{O(poly(g(k)))}]).
\]
In particular, if $R=1$ and for every fixed $k$ we have $C(x,k)=O(poly(|\Phi(x,k)|))$ then, for every fixed $k$, negative instances $\Phi(x,k)$ of $L$ have extended Frege proofs of size polynomial in $|\Phi(x,k)|$. 
\item Assume that negative instances $(x,k)$ of $L$ have data reduction chains of length $C(x,k)=O(1)$ ($O(log(|\Phi(x,k)|))$, respectively), where the constant may depend on $k$, and that the safety of each reduction $\Phi(x_{i},k_{i})\vdash \Phi(x_{i+1},k_{i+1})$  is witnessed by Frege proofs of size $\leq p(|\Phi(x,k)|)$, for some fixed polynomial $p(\cdot)$ Then for every fixed $k$,  negative instances $(\Phi(x,k),k)$ of $L$ have Frege proofs of size polynomial (quasipolynomial) in $|\Phi(x,k)|$.
\end{enumerate} 
\label{thm-main}
\end{theorem}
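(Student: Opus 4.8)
The plan is to unwind the recursion implicit in a data reduction chain. Fix a negative instance $(x,k)\notin L$. The kernelization guarantees a chain $(x_0,k_0),\ldots,(x_m,k_m)$ with $(x_0,k_0)=(x,k)$, $m = C(x,k)$, and the terminal instance $(x_m,k_m)$ of size at most $g(k)$. Since $(x,k)\notin L$, safety of the rules forces $(x_m,k_m)\notin L$ as well, so $\Phi(x_m,k_m)$ is an unsatisfiable CNF of size $2^{O(\mathrm{poly}(g(k)))}$ in the worst case; by completeness it has an (extended) Frege refutation of size $2^{O(\mathrm{poly}(g(k)))}$ (a brute-force truth-table/resolution argument). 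This is the base case of the induction.

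For the inductive step I would go backwards along the chain. Suppose I already have an (extended) Frege proof $P_{i+1}$ deriving the empty clause (a contradiction) from the clauses of $\Phi(x_{i+1},k_{i+1})[Y]$, of size $S_{i+1}$. I want $P_i$ from $\Phi(x_i,k_i)[X]$. By the soundness hypothesis for the rule applied at step $i$, there is a tautology $\Xi_i=\bigvee_{t=1}^{r'_i}\Xi_{i,t}$ with $r'_i\le R$, and for each $t$ a derivation of $\eta_{i,t}$ (an isomorphic copy of $\Phi(x_{i+1},k_{i+1})$, under a renaming $\rho_t$) from $\Phi(x_i,k_i)\wedge\Xi_{i,t}$, with the combined proofs of all these derivations plus the proof that $\Xi_i$ is a tautology totalling at most $h(|\Phi(x,k)|)$. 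So: for each $t$, assume $\Xi_{i,t}$, derive the clauses of $\eta_{i,t}$ from $\Phi(x_i,k_i)$, then apply the renamed copy $\rho_t(P_{i+1})$ of the refutation to obtain a contradiction — this refutes $\Phi(x_i,k_i)\wedge\Xi_{i,t}$; by Lemma~\ref{foo-size} (the deduction-style step) we get $\overline{\Xi_{i,t}}$, i.e. the clause negating one disjunct, from $\Phi(x_i,k_i)$ at cost essentially the size of that sub-refutation. Doing this for all $t\le r'_i$ yields $\overline{\Xi_{i,1}},\ldots,\overline{\Xi_{i,r'_i}}$, whose conjunction contradicts the tautology $\Xi_i$; splicing in the (extended) Frege proof that $\Xi_i$ holds closes the argument and gives a refutation $P_i$. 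The size bookkeeping is $S_i \le R\cdot S_{i+1} + h(|\Phi(x,k)|) + O(1)$ (the factor $R$ because $P_{i+1}$ is copied once per disjunct; the additive $h$ absorbs the derivations of the $\eta_{i,t}$ and the proof of $\Xi_i$). Unrolling from $i=m$ down to $i=0$ gives $S_0 = O\bigl((\sum_{i=0}^{C(x,k)} R^i)\,[\,h(|\Phi(x,k)|) + 2^{O(\mathrm{poly}(g(k)))}\,]\bigr)$, which is exactly the claimed bound. The specialisation $R=1$, $C(x,k)=\mathrm{poly}$ collapses the geometric sum to $C(x,k)$ and yields polynomial-size extended Frege proofs for each fixed $k$.

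For part 2, the structure is identical but one must be careful that the proofs stay genuinely Frege rather than extended Frege. Two issues arise. First, the "substitutions" used to identify variables of $\Phi(x_{i+1},k_{i+1})$ with formulas over the variables of $\Phi(x_i,k_i)$ must be eliminated by expanding definitions; since the hypothesis of part 2 is that the reductions $\Phi(x_i,k_i)\vdash\Phi(x_{i+1},k_{i+1})$ already have Frege proofs of size $\le p(|\Phi(x,k)|)$ (so the substituted formulas are small and can be inlined), and since $R=1$ here there is no branching blowup, this expansion costs only a polynomial factor. Second, the base-case refutation of the kernel $\Phi(x_m,k_m)$: for fixed $k$ the kernel has constant size, so its refutation has constant (depending on $k$) size in any Frege system. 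With $C(x,k)=O(1)$ the total is $O(1)\cdot\mathrm{poly}(|\Phi(x,k)|)$, hence polynomial; with $C(x,k)=O(\log|\Phi(x,k)|)$ it is $O(\log|\Phi(x,k)|)\cdot\mathrm{poly}(|\Phi(x,k)|)$, still polynomial — so to actually land on a \emph{quasipolynomial} (and in the interesting Kneser case to match the $R^{C}$ geometric growth when $R>1$) one reads the bound of part 1 with $R>1$ and $C=O(\log)$, giving $R^{O(\log|\Phi|)}=|\Phi|^{O(1)}$... wait, that is still polynomial; the quasipolynomial regime is $C=\mathrm{polylog}$ or $R^{\mathrm{polylog}}$, and I would phrase the bound accordingly, tracking $\sum_{i\le C}R^i = O(R^{C})$ carefully.

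The main obstacle I anticipate is the careful handling of the renaming/substitution isomorphisms $\eta_{i,t}\cong\Phi(x_{i+1},k_{i+1})$ together with the copied refutation $\rho_t(P_{i+1})$: one must verify that applying a refutation of $\Phi(x_{i+1},k_{i+1})$ to its isomorphic copy $\eta_{i,t}$ is legitimate and cheap in (extended) Frege — straightforward, since Frege proofs are closed under renaming of variables — and, more delicately, that in part 2 the accumulated substitutions across $C(x,k)$ composed reduction steps do not blow up when expanded into a pure Frege proof. Since each step's substituted formulas have size bounded via $p(\cdot)$ and composition is along a path (not a tree, as $R=1$), the expansion is controlled, but this is the point that needs the most care. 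The other routine-but-nontrivial piece is the arithmetic of the size recursion, i.e. solving $S_i \le R\,S_{i+1} + h$ to get the displayed geometric-sum bound, and checking the stated corollaries ($R=1$ case, the $O(1)$ vs. $O(\log)$ dichotomy) fall out cleanly.
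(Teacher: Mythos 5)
Your part 1 argument is correct and is essentially the paper's proof, repackaged as a backward induction: the recurrence $S_i \le R\cdot S_{i+1} + h(|\Phi(x,k)|)$, with base case $S_m = 2^{O(\mathrm{poly}(g(k)))}$, unrolls to the stated geometric-sum bound, and the use of Lemma~\ref{foo-size} to turn a refutation of $\Phi(x_i,k_i)\wedge\Xi_{i,t}$ into a derivation of $\overline{\Xi_{i,t}}$ is exactly the right ingredient. The paper presents the same computation slightly differently (counting nodes of the reduction tree directly, then separating the per-node reduction cost from the $R^{C}$ leaf refutations), but the two analyses are interchangeable.

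Part 2, however, contains a genuine gap, and your own closing paragraph signals it: you compute $O(\log|\Phi|)\cdot\mathrm{poly}(|\Phi|)$ and correctly observe that this is polynomial, which contradicts the theorem's claim of \emph{quasipolynomial} — so something in your accounting must be off. The error is in the sentence ``this expansion costs only a polynomial factor.'' The substitutions implicit in successive reduction steps are \emph{nested}: the variables of $\Phi(x_{i+1},k_{i+1})$ are defined as formulas over the variables of $\Phi(x_i,k_i)$, which in turn are formulas over those of $\Phi(x_{i-1},k_{i-1})$, and so on back to $\Phi(x,k)$. Unwinding this composition of depth $C(x,k)$ produces, in the worst case, formulas of size roughly $\mathrm{poly}^{C(x,k)}$, not $C(x,k)\cdot\mathrm{poly}$ — the blowup is multiplicative per level, not additive. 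For $C(x,k)=O(1)$ this is polynomial; for $C(x,k)=O(\log|\Phi(x,k)|)$ it is $|\Phi(x,k)|^{O(\log|\Phi(x,k)|)}$, i.e.\ quasipolynomial, which is precisely the dichotomy the theorem asserts. The paper's proof of part 2 defers exactly this bookkeeping to the arguments of Buss and of Aisenberg et al.\ cited there. Also, your speculation that the $R^{C}$ factor with $R>1$ might be the source of the quasipolynomial growth is a distraction: for $C=O(\log)$ that factor is polynomial, and the paper treats it as a secondary polynomial multiplier on top of the already-quasipolynomial chain cost.
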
 


Next we highlight some application of our main (meta)theorem: 
\subsection{Proof Complexity of (Dual) Coloring}

\begin{theorem} 
There exists a kernelization that reduces  instances $(G,k)$ of DUALCOL to a kernel of size at most $3k-2$. The length of reduction chain in this kernelization is $O(k)$. The soundness of each reduction step can be witnessed by polynomial size Frege proofs. Hence, for every fixed $k$, negative instances $(G,k)$ of DUALCOL have Frege proofs of size polynomial in $|\Phi_{G,k}|$. 
\label{theorem-dualcol}
\end{theorem}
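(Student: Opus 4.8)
The plan is to instantiate the second part of Theorem~\ref{thm-main} with the classical crown--decomposition kernelization for the dual of graph colouring (see e.g.\ \cite{fomin2019kernelization}), set up so that each reduction step keeps the number of available colours $n-k$ fixed; this invariant is exactly what makes the corresponding propositional implication trivial to certify in Frege.

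\emph{Step 1: the kernelization.} Let $(G,k)$ be a negative instance of DUALCOL, so $\chi(G)>n-k$ and $\Phi_{G,k}$ is unsatisfiable. Repeatedly apply the crown rule: after a trivial rule deleting isolated vertices, if $G$ is still large enough the crown lemma produces in polynomial time a crown decomposition $(C,H,R)$ (Definition~\ref{cd}) with a matching $M$ saturating $H$ into $C$; the rule replaces $(G,k)$ by $(G[H\cup R],\,k-|C|)$. Its safety is the equivalence $\chi(G)\le n-k \iff \chi(G[H\cup R])\le (n-|C|)-(k-|C|)$: the forward direction is immediate since $G[H\cup R]$ is an induced subgraph and $(n-|C|)-(k-|C|)=n-k$, while the converse extends an optimal colouring of $G[H\cup R]$ by routing the colours of $H$ along $M$ and then colouring each $c\in C$ with a colour avoided by its at most $|H|$ neighbours in $H$ --- a pigeonhole step involving $O(k)$ objects. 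This is the standard argument and, as shown in the references, yields a kernel on at most $3k-2$ vertices; moreover every reduction step strictly decreases $k$ (by $|C|\ge 1$ for the crown rule, by $1$ for isolated-vertex deletion), and $k$ stays $\ge 1$ along the chain because on a negative instance any instance with $k'\le 0$ would be trivially positive --- hence the reduction chain has length $O(k)$. On a negative chain we also always have $|C|\le k-1$ for the same reason.

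\emph{Step 2: simulating a step in Frege.} For the metatheorem we only need the implication $\Phi_{G,k}\vdash\Phi_{G[H\cup R],\,k-|C|}$, i.e.\ that a proper $(n-k)$-colouring of $G$ yields a proper $\big((n-|C|)-(k-|C|)\big)$-colouring of $G[H\cup R]$ --- the \emph{easy} direction above. Because $G[H\cup R]$ is an induced subgraph of $G$ and the two encodings of Example~\ref{ex:col} adapted to DUALCOL use the \emph{same} colour set $[n-k]=[(n-|C|)-(k-|C|)]$, under the identification $X'_{v,i}:=X_{v,i}$ every clause of $\Phi_{G[H\cup R],\,k-|C|}$ --- the ``some colour'' clauses, the ``at most one colour'' clauses, the $Y$-unit clauses for pairs inside $H\cup R$, and the edge/colour clauses --- is literally a clause of $\Phi_{G,k}$. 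Hence the implication is witnessed by a Frege proof of size $O(|\Phi_{G,k}|)$, with no case split ($R=1$); the pigeonhole argument is needed only for the \emph{correctness} of the kernelization, not for the proof-complexity simulation, so plain Frege (not extended Frege) suffices.

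\emph{Step 3: assembling.} Apply the second part of Theorem~\ref{thm-main} with $R=1$, chain length $C(G,k)=O(k)$, per-step Frege proofs of size $p(|\Phi_{G,k}|)=O(|\Phi_{G,k}|)$, and kernel size $g(k)=3k-2$. For every fixed $k$ the chain length is $O(1)$ and the kernel has $O(1)$ vertices, so its unsatisfiability has a brute-force Frege refutation of $O(1)$ size; composing it with the $O(k)$ sub-formula derivations of Step~2 gives a Frege refutation of $\Phi_{G,k}$ of size polynomial (in fact linear) in $|\Phi_{G,k}|$, which is Theorem~\ref{theorem-dualcol}. The only genuinely delicate part of the argument is Step~1: one must pick the crown reduction so that the palette size $n-k$ is preserved, so that the propositional encodings before and after a step line up as sub-formulas; a reduction that altered the palette in a non-trivial way would in general force the per-step simulation to use extended Frege rather than plain Frege.
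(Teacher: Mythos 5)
Your Step 1 departs from the paper's kernelization in a way that breaks the whole argument. The paper applies crown decomposition to the \emph{complement} graph $\overline{G}$, not to $G$: rule (a) deletes universal vertices of $G$ (i.e.\ isolated vertices of $\overline{G}$), rule (b) declares $(G,k)$ positive if $\overline{G}$ has a matching of size $k$, and rule (c) takes a crown decomposition $(C,H,R)$ of $\overline{G}$ (so $C$ is a clique in $G$ completely joined to $R$, and the $H$--$C$ matching consists of \emph{non}-edges of $G$) and deletes $H\cup C$ with $k'=k-|H|$. You instead delete isolated vertices of $G$, apply the Crown Decomposition Lemma to $G$ itself, and delete only $C$.

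This cannot be made to work. The Crown Decomposition Lemma produces a crown decomposition only when the graph it is applied to has bounded matching number; otherwise it returns a large matching. For a negative DualCOL instance with $n\geq 3k+1$ vertices it is $\overline{G}$, never $G$, that has a small matching: if $G$ had a maximum matching of size $\leq k$, then $vc(G)\leq 2k$, so $G$ would have an independent set of size $\geq n-2k$, giving $\chi(G)\leq 2k+1\leq n-k$ and making $(G,k)$ positive. So on precisely the instances that need reducing, the crown lemma applied to $G$ never fires, and your reduction chain stalls long before the kernel. The companion rule (b) is also essential and missing: it is what certifies that $\overline{G}$ has a matching of size $<k$, which is the hypothesis under which the crown lemma applied to $\overline{G}$ is guaranteed to produce a crown decomposition.

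As a consequence your Step 2 is vacuous, and its key selling point --- that keeping the palette size $n-k$ constant makes $\Phi_{G',k'}$ a literal subformula of $\Phi_{G,k}$, so the per-step simulation is trivial --- is exactly the feature that the correct reduction does \emph{not} have. With the crown decomposition in $\overline{G}$, each $c\in C$ is adjacent in $G$ to all of $R$, so an optimal colouring of $G[R]$ cannot be extended to $C$ by reusing colours; the safe rule must drop both $H$ and $C$ and the palette shrinks from $n-k$ to $n-k-|C|$. The per-step Frege simulation is therefore genuinely nontrivial: the paper's appendix relabels the colours of $G[R]$ using Buss's $COUNT$ predicate, the pigeonhole principle, and Lemma~\ref{foo-size}. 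Your closing assertion --- that a palette-altering reduction "would in general force the per-step simulation to use extended Frege" --- is not borne out either: the paper obtains plain Frege proofs precisely because this counting/PHP machinery has polynomial-size Frege implementations and, for fixed $k$, the $O(k)$-length chain of substitutions can be unwound at polynomial cost.
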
 

\begin{proof} 

The kernelization is a variant of the classical one from the parameterized complexity literature, based on  \emph{crown decompositions} (Definition~\ref{cd}). It consists of three data reductions: 
\begin{itemize} 
\item[(a).] Let $All(G)$ be the set of vertices $v$ adjacent to all other vertices in $G$.  If $All(G)\neq \emptyset$ then $(G,k)\in DualCol \Leftrightarrow (G\setminus All(G),k-|All(G)|)\in DualCol.$
\item[(b).] If $All(G)=\emptyset$ but $\overline{G}$ has a matching of size $k$, $x_{1},y_{1},\ldots, x_{k},y_{k}$, with $x_{i}$ being matched to $y_i$ for $i=1,\ldots, k$, then  $(G,k)\in DualCol$ (so reduce it to an arbitrary positive instance). 
\item[(c).] Assume that rules (a),(b) do not apply. Let $(C,H,R)$ be a crown decomposition of the graph $\overline{G}$. Reduce $(G,k)$ to $(G^{\prime},k^{\prime})$, by deleting $H\cup C$ from $G$ and $k^{\prime}=k-|H|$. 
\end{itemize} 

Without loss of generality, we will only apply rule (c). to crown decompositions where $|H|\neq \emptyset$ and all nodes in $C$ are matched to some node in $H$. This is possible for the following reason: if $|H|\neq \emptyset$ and the original crown decomposition had other vertices in $C$, just move them to $R$. If, on the other hand $|H|=\emptyset$ then all vertices in $C$ would be connected to all vertices in $C\cup R$, hence to all vertices of $G$. But this cannot happen, since the case $All(G)\neq \emptyset$ is covered by the first data reduction rule. 

The Crown Decomposition Lemma (Lemma 4.5 of  \cite{fomin2019kernelization}) makes sure that at least one of 
reduction rules (a),(b),(c) applies to every graph with more than $3k-2$ vertices.

The safety of reduction rule (c) can be informally justified as follows: since vertices in a crown decomposition of $\overline{G}$ are matched in a matching $m$, vertices $v\in H$ and $m(v)$ is $C$ are not connected in $G$, hence they can be colored with the same color. At the same time, $m(v)$ is connected in $G$ to all the vertices of $G^{\prime}$, hence must assume  a color different from all the colors of vertices of $G^{\prime}$. Also $m(v_1)$ and $m(v_2)$ are connected, so must assume distinct colors. In conclusion, vertices of $C$ must use $|C|$ different colors, and $G$ is $n-k$ colorable if and only if $G^{\prime}$ is $n-k-|C|$ colorable. But 
$|G^{\prime}|=n-2|C|$, so $G$ is $n-k$ colorable if and only if $G^{\prime}$ is $|G^{\prime}|- (k-|C|)$ colorable. \end{proof}

\subsection{Application: Proof Complexity of Schrijver's Theorem}

In this section we deal with the proof complexity of a stronger version of the Kneser-Lov\'asz theorem known as Schrijver's Theorem \cite{schrijver1978vertex}. This  is a statement about the chromatic number of the so-called \emph{stable Kneser graph} $SKn_{n,k}$,  defined as follows: 

\begin{definition} Call a set $A\subseteq {{n}\choose {k}}$ \emph{stable} if $A$ does not contain two elements that are consecutive (we also consider $n$ and $1$ as consecutive). Denote the set of stable sets by ${{n}\choose {k}}_{st}$. The \emph{stable Kneser graph} $SKn_{n,k}$ is the subgraph of $Kn_{n,k}$ induced by the set ${{n}\choose {k}}_{st}$. 
\end{definition} 

Schrijver's theorem asserts that the chromatic number of the stable Kneser graph $SKn_{n,k}$ is $n-k+2$. We are, of course, interested mainly in the harder part of  this result, the lower bound $\chi(SKn_{n,k})>n-2k+1$. Since $SKn_{n,k}$ is the subset of the Kneser graph (see e.g. Figure~\ref{fig:kneser}, where the central star is the stable Kneser graph $SKn_{5,2}$), this strengthens the (harder part of the) Kneser-Lov\'asz theorem. 
The propositional translation of Schrijver's theorem is immediate, and the resulting unsatisfiable formulas, that we will denote by Schrijver$_{n,k}$ are subformulas of formulas $Kneser_{n,k}$. From Theorem~\ref{theorem-dualcol} we infer the following:

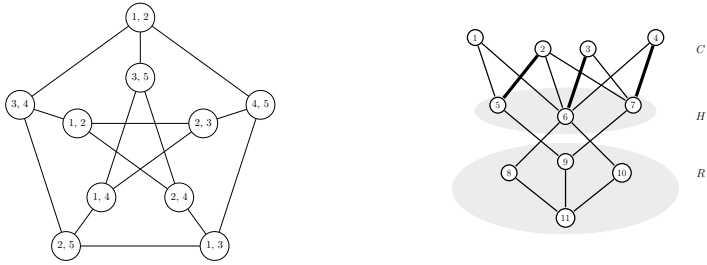
\begin{figure}[h]
\begin{center}
\begin{minipage}{.45\textwidth}
\begin{center} 
\scalebox{0.4}{

\begin{tikzpicture}[every node/.style={draw,shape=circle,text=black}]
    \foreach \i / \ll [count=\ii from 0] in {90/{1, 2}, 162/{3, 4}, 234/ {2, 5}, 306/ {1, 3}, 378/ {4, 5}}
		\path (\i:42mm) node (p\ii) {\ll};
		    \foreach \i / \ll [count=\ii from 5] in {90/{3, 5}, 162/{1, 2}, 234/ {1, 4}, 306/ {2, 4}, 378/ {2, 3}}
		\path (\i:22mm) node (p\ii) {\ll};
		
	\draw (p0) -- (p1);
	\draw (p1) -- (p2);
	\draw (p2) -- (p3);
	\draw (p3) -- (p4);
	\draw (p4) -- (p0);
	\draw (p0) -- (p5);
	\draw (p1) -- (p6);
	\draw (p2) -- (p7);
	\draw (p3) -- (p8);
	\draw (p4) -- (p9);
	\draw (p5) -- (p7);
	\draw (p5) -- (p8);
	\draw (p6) -- (p8);
	\draw (p6) -- (p9);
	\draw (p7) -- (p9);
	
\end{tikzpicture}}%
\end{center} 
\end{minipage} 
\begin{minipage}{.50\textwidth} 
\begin{center} 
\scalebox{0.3}{

\begin{tikzpicture}[shorten >=1pt, auto, node distance=1cm, ultra thick]
    \tikzstyle{node_ordinar} = [circle, draw=black,fill=white!, minimum size=0.7cm]
    \tikzstyle{node_label} = [circle, draw=white,fill=white!, minimum size=0.7cm]
 
    \tikzstyle{edge_style} = [draw=black, line width=2, ultra thick]
    \tikzstyle{edge_match} = [draw=black, line width=4]

    \node[node_ordinar] (v1) at (-4,4) {\large 1};
    \node[node_ordinar] (v2) at (-1,3.5) {\large 2};
    \node[node_ordinar] (v3) at (1,3.5) {\large 3};
    \node[node_ordinar] (v4) at (4,4) {\large 4};

	\draw [gray!15,fill=gray!15] (0,0.75) ellipse (4cm and 1cm);
    \node[node_ordinar] (v5) at (-3,1) {\large 5};
    \node[node_ordinar] (v6) at (0,0.5) {\large 6};
    \node[node_ordinar] (v7) at (3,1) {\large 7};

	\draw [gray!15,fill=gray!15] (0,-2.7) ellipse (5cm and 2cm);
    \node[node_ordinar] (v8) at (-2.5,-2) {\large 8};
    \node[node_ordinar] (v9) at (0,-1.5) {\large 9};
    \node[node_ordinar] (v10) at (2.5,-2) {\large 10};
    \node[node_ordinar] (v11) at (0,-4) {\large 11};

    \node[node_label] (v12) at (6,3.5) {\Large $C$};
    \node[node_label] (v13) at (6,0.5) {\Large $H$};
    \node[node_label] (v14) at (6,-2) {\Large $R$};
    
    \draw[edge_style]  (v1) edge (v5);
    \draw[edge_style]  (v1) edge (v6);
    \draw[edge_match]  (v2) edge (v5);
    \draw[edge_style]  (v2) edge (v6);
    \draw[edge_style]  (v2) edge (v7);
    \draw[edge_match]  (v3) edge (v6);
    \draw[edge_style]  (v3) edge (v7);
    \draw[edge_style]  (v4) edge (v6);
    \draw[edge_match]  (v4) edge (v7);
    \draw[edge_style]  (v5) edge (v9);
    \draw[edge_style]  (v6) edge (v8);
    \draw[edge_style]  (v6) edge (v10);
    \draw[edge_style]  (v7) edge (v9);
    \draw[edge_style]  (v8) edge (v11);
    \draw[edge_style]  (v9) edge (v11);
    \draw[edge_style]  (v10) edge (v11);

 \end{tikzpicture}}
\end{center} 
\end{minipage}
\end{center} 
\caption{(a). The Kneser graph $Kn_{5,2}$.  (b). A crown decomposition of a graph.}
\label{fig:kneser} 
\end{figure}

\begin{corollary} 
For every fixed $k$, formulas $Schrijver_{n,k}$ have polynomial size Frege proofs. 
\end{corollary}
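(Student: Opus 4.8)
The plan is to present Schrijver's theorem as a family of negative $DualCol$ instances and then invoke Theorem~\ref{theorem-dualcol}. The hard half, $\chi(SKn_{n,k})>n-2k+1$, is exactly the statement that $SKn_{n,k}$, paired with the parameter $j$ for which $|V(SKn_{n,k})|-j=n-2k+1$, lies in $\overline{DualCol}$; pushing this through the canonical $\overline{COL}$-to-$\overline{SAT}$ translation of Example~\ref{ex:col}, restricted to the variables attached to stable $k$-sets, yields precisely $Schrijver_{n,k}$ --- which is a subformula of $Kneser_{n,k}$ because $SKn_{n,k}$ is the induced subgraph of $Kn_{n,k}$ on the stable sets. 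So it suffices to check that the crown-decomposition kernelization of Theorem~\ref{theorem-dualcol} behaves well on the stable Kneser graph, and then quote that theorem.

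First I would specialise the reduction rules (a)--(c) of Theorem~\ref{theorem-dualcol} to $G=SKn_{n,k}$: rule (a) deletes stable $k$-sets meeting every other stable set (none, once $n$ is large); rule (b) tests whether $\overline{SKn_{n,k}}$ has a large matching --- a family of pairwise intersecting stable sets --- which for Schrijver instances never succeeds because $\chi(SKn_{n,k})$ is too large; rule (c) picks a crown $(C,H,R)$ in $\overline{SKn_{n,k}}$, i.e.\ an independent family $C$ of pairwise disjoint stable $k$-sets with a head $H$ matched into it, deletes $C\cup H$, and lowers the colour budget by $|C|$. Three things then need checking: (i) the resulting sub-instance is again a Schrijver-type formula up to renaming of the $X_{v,i},Y_{v,w}$ variables --- one identifies ``stable $k$-sets of $[n]$ avoiding the crown elements'' with the stable $k$-sets of a shorter cycle, being careful about the cyclic ``seam'' produced when ground-set elements are deleted; (ii) the soundness of each step (``the $|C|$ mutually adjacent crown vertices are adjacent to every surviving vertex, hence use exactly $|C|$ fresh colours, while each head vertex may copy its partner's colour'', as argued informally in the proof of Theorem~\ref{theorem-dualcol}) becomes a polynomial-size \emph{Frege} derivation; (iii) the reduction chain is short. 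Given (i)--(iii), item~2 of Theorem~\ref{thm-main} delivers polynomial-size Frege proofs of $Schrijver_{n,k}$ for each fixed $k$, and restricting such a proof to the stable variables also refutes $Kneser_{n,k}$, recovering and improving the Aisenberg et al.\ bound.

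I expect (iii) to be the genuine difficulty. A crown in $\overline{SKn_{n,k}}$ is a family of pairwise disjoint stable $k$-sets and so has only $O(n/k)$ members --- a vanishing fraction of the $\Theta(n^{k})$ vertices --- so a kernelization that peels off one crown at a time has a polynomial-length chain, which through Theorem~\ref{thm-main} only buys polynomial-size \emph{extended} Frege proofs, while geometric shrinking ($O(\log n)$ steps) buys only \emph{quasipolynomial} Frege proofs. To get \emph{polynomial} Frege one must force the chain length down to a function of $k$ alone --- for instance a batched reduction taking $SKn_{n,k}$ directly to $SKn_{N(k),k}$ with $N(k)=\mathrm{poly}(k)$ --- without the per-step soundness certificate ceasing to be an ordinary polynomial-size Frege proof. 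Trading aggressiveness of the reduction against Frege-checkability of its soundness is the crux, and it is exactly here that the vertex-criticality underlying Schrijver's theorem, not merely the Kneser chromatic bound, must be used.
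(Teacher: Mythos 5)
Your instinct that item (iii) --- controlling the length and endpoint of the reduction chain --- is where the argument will stand or fall is correct, and you correctly compute the DualCol parameter of the Schrijver instance as $j = |V(SKn_{n,k})| - (n-2k+1)$. What you do not push through is the order of magnitude of this $j$. By Lemma~\ref{cardinal-stable-kneser}, $|V(SKn_{n,k})| = \binom{n-k+1}{k}+\binom{n-k}{k-1} = \Theta(n^{k})$ for fixed $k$, so $j = \Theta(n^{k})$ as well. Consequently the kernel promised by Theorem~\ref{theorem-dualcol} has size $\leq 3j-2 = \Theta(n^{k})$ --- the same order as the original graph. In fact for $k\geq 2$ and $n$ large we have $|V(SKn_{n,k})| < 3j$ already, so the Crown Decomposition Lemma's hypothesis is never met and the generic crown kernelization does not fire at all; the original instance \emph{is} its own kernel, and a brute-force refutation of that kernel has size $2^{\Theta(n^{2k})}$, which is exponential. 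In short, Theorem~\ref{theorem-dualcol}'s ``for every fixed $k$'' guarantee refers to the DualCol parameter, and for Schrijver instances that parameter is a polynomial in $n$, not a constant. A black-box invocation of Theorem~\ref{theorem-dualcol}, which is the plan announced in your first paragraph, therefore cannot deliver the stated corollary.

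Your fallback idea --- a batched, Schrijver-specific reduction taking $SKn_{n,k}$ in few steps to $SKn_{N(k),k}$ with $N(k)=\mathrm{poly}(k)$ --- is the right kind of fix and is in the spirit of what the paper does elsewhere (the geometric shrinking rule in Theorem~\ref{thm-schrijver}, which removes a constant fraction of ground-set elements per step and thus removes many color classes at once, and the Aisenberg et al.\ star-removal reductions for the Kneser graph). But your proposal stops where that work would begin: the concrete reduction rules are not written down, the argument that the chain is constant-length (rather than polynomial or logarithmic in $n$) is not made, and the per-step soundness is not checked to be an ordinary polynomial-size Frege derivation. As written, there is a genuine gap: the first paragraph's appeal to Theorem~\ref{theorem-dualcol} is invalid, and the replacement argument in the last paragraph is described rather than carried out.
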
 

An alternative, direct quasipolynomial upper bound for Frege proofs of formulas $Schrijver_{n,k}$,  applying our metatheorem to a different kernelization, is given (Theorem~\ref{thm-schrijver}) in the Appendix. 
 
\subsection{Buss Meets Buss: the Proof Complexity of  Vertex Cover}

In this subsection we study the proof complexity of Vertex Cover, the "drosophila of parameterized complexity" \cite{fomin2021kernelization}. We apply our result to a variation of the standard kernelization of VC (called in \cite{fomin2021kernelization} \emph{the Buss reduction}, hence the title of this subsection) to prove: 

\begin{theorem} Instances $(G,k)$ of VC have a kernelization with a data reduction chain of length $O(k)$ to a kernel with at most $k^2$ vertices. The soundness of each step in this data reduction can be witnessed by Frege proofs of size polynomial in $|\Phi(G,k)|$. Hence, for every fixed $k$ negative instances $\Phi(G,k)$ of VC have Frege proofs of size polynomial in $|\Phi(G,k)|$.  
\label{theorem-vc}
\end{theorem}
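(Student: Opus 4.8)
The plan is to apply the metatheorem (Theorem~\ref{thm-main}, part 1, in its Frege incarnation via part 2) to a suitable variant of the Buss kernelization, so the work splits into three tasks: (i) exhibit data reduction rules for VC whose iterated application produces a kernel on at most $k^2$ vertices in $O(k)$ steps, (ii) show the soundness of each reduction step is witnessed by polynomial-size Frege proofs (in the sense of the preceding definition, with branching parameter $R=1$), and (iii) invoke the metatheorem. The Buss reduction uses two rules: (B1) if $G$ has an isolated vertex $v$ (degree $0$), delete it, keeping $k$ unchanged; (B2) if $G$ has a vertex $v$ of degree $>k$, then $v$ must lie in every vertex cover of size $\le k$, so delete $v$ and all incident edges and decrement $k$. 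The classical analysis shows that when neither rule applies, every vertex has degree between $1$ and $k$, so a cover of size $\le k$ touches at most $k^2$ edges and hence (since there are no isolated vertices) the graph has at most $k^2$ vertices and at most $k^2$ edges; otherwise the instance is negative. To get a reduction chain of length $O(k)$ rather than $O(n)$ I would apply (B1) in one ``batch'' sweep (delete all current isolated vertices at once) interleaved with single applications of (B2); since each (B2) step decrements $k$, there are at most $k$ such steps, and the batched (B1) sweeps contribute $O(k)$ more, giving $C(G,k)=O(k)$. Each individual step still maps $(x,k)$ to an equivalent instance with $|x'|\le|x|$, so it is a legitimate data reduction rule.

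The heart of the argument is task (ii): verifying that each reduction step admits a polynomial-size Frege derivation of $\Phi(x_{i+1},k_{i+1})$ from $\Phi(x_i,k_i)$ in the substitution sense described in the framed box (with $R=1$, i.e. no case split). For a (B1) step deleting an isolated vertex $v$: in $\Phi(G,k)$ the variables $X_{v,1},\dots,X_{v,k}$ are constrained only by the clauses forbidding $v$ to occupy two slots and by the $Y$-clauses, but since $v$ is isolated all $Y_{v,w}$ are false (unit clauses $\overline{Y_{v,w}}$ are present), so from $\Phi(G,k)$ one derives (using the clause $\overline{X_{v,i}}\vee\bigvee_{w\neq v}Y_{v,w}$ together with the $\overline{Y_{v,w}}$ units) that $\overline{X_{v,i}}$ for every $i$, hence $v$ contributes nothing to any slot; substituting each $X_{v,i}\mapsto\bot$ then collapses $\Phi(G,k)$ to $\Phi(G\setminus v,k)$ with only a linear number of resolution steps. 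For a (B2) step at a high-degree vertex $v$: the key sub-derivation is the dominating-vertex argument --- from $\Phi(G,k)$ one shows that some slot $i$ must be occupied by $v$, because the $k$ slots can cover at most $k-1$ of $v$'s $>k$ incident edges unless $v$ itself is chosen; formalizing this uses a pigeonhole-type counting over the slot-assignment clauses, and crucially the relevant pigeonhole instance has a \emph{fixed} parameter ($k+1$ pigeons into $k$ holes, with $k$ constant), so it has polynomial-size (indeed, bounded-depth or constant-size modulo the encoding) Frege proofs. Having derived ``$v$ is in the cover'', one renames slots so that $v$ occupies slot $k$, substitutes the $X$-variables accordingly, discards $v$ and its edges, and reads off $\Phi(G\setminus v,k-1)$; again this is a polynomial-size transformation. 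Summing over the $O(k)$ steps, the total Frege size is polynomial in $|\Phi(G,k)|$.

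For task (iii): the kernel has at most $k^2$ vertices and the kernel parameter is at most $k$, so a negative kernel instance $\Phi(x_m,k_m)$ is an unsatisfiable CNF on $O(k^2\cdot k)=O(k^3)$ variables, which for fixed $k$ has constant size and hence a constant-size Frege refutation (e.g.\ by brute force / truth-table expansion, invoking Lemma~\ref{foo-size} to splice it in). Plugging into the metatheorem with $R=1$, $C(G,k)=O(k)$, $h=\mathrm{poly}(|\Phi(G,k)|)$, and kernel-refutation size $O(1)$ for fixed $k$, we obtain Frege proofs of $\Phi(G,k)$ of size $O(k)\cdot(\mathrm{poly}(|\Phi(G,k)|)+O(1))=\mathrm{poly}(|\Phi(G,k)|)$, as claimed.

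I expect the main obstacle to be task (ii) for rule (B2): making the informal ``$v$ must be in every small cover'' argument into an explicit, genuinely polynomial-size Frege derivation from the clausal encoding $\Phi(G,k)$. The subtlety is that the natural argument counts edges covered by slots, which is a counting/pigeonhole statement, and one must be careful that (a) the counting is over a set of size $\Theta(k)$, not $\Theta(n)$, so that the pigeonhole subproof is small, and (b) the encoding's auxiliary clauses (the $\overline{X_{v,i}}\vee\bigvee_{w\ne v}Y_{v,w}$ clauses, which the Vertex Cover example flags as slightly delicate) are handled correctly, or else one works with the ``some extra technical complications'' variant that removes them. A secondary, more bookkeeping-level obstacle is the slot-renaming: the substitution that re-indexes the $X$-variables after deleting $v$ must be exhibited as an explicit map and one must check the decremented-parameter formula $\Phi(G\setminus v,k-1)$ really is obtained (on its own variable set) after the substitution, matching the $\Phi_1\vdash\Phi_2$ convention of the framed box.
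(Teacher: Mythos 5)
Your proposal follows essentially the same route as the paper: the Buss kernelization with the isolated-vertex rule applied in a batch and the high-degree rule applied $\leq k$ times, the soundness of the isolated-vertex step established by deriving $\overline{X_{v,i}}$ from the unit clauses $\overline{Y_{v,w}}$ and the auxiliary clause $\overline{X_{v,i}}\vee\bigvee_{w\neq v}Y_{v,w}$, the soundness of the high-degree step established by deriving $X_{w,1}\vee\ldots\vee X_{w,k}$ for each neighbor $w$ and then invoking a polynomial-size Frege refutation of the resulting pigeonhole instance (the paper uses $PHP_{|N(v)|}^{k}$ via Buss's bound, so constancy of $k$ is not even needed for this sub-step), followed by Lemma~\ref{foo-size} and the metatheorem. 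The slot-renaming and substitution bookkeeping you flag as a potential obstacle is glossed over in the paper too, so your assessment of where the remaining technicalities lie is accurate.
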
 

\begin{proof} 
 Informally, we will use the following two data reduction rules:  
\begin{description} 
\item[(a).] if $G$ has a vertex $v$ of degree larger than $k$ then G has a VC of size $\leq k$ if and only if $G\setminus \{v\}$ has a VC of size $\leq k-1$. Indeed, $v$ must be part of any VC of $G$ of size $\leq k$. 
\item[(b).] if $Isolated(G)$ denotes the set vertices $v$ in $G$ that are isolated then $G$ has a VC of size $\leq k$ iff $G\setminus Isolated(G)$ has a VC of size $\leq k$. 
\end{description} 
The kernel of these two reduction rules, the set of instances $(G,k)$ of $VC$ such that none of the two rules applies is composed of graphs of at most $k^2$ vertices only \cite{fomin2019kernelization}.
See the Appendix for an encoding of the soundness of these rules by polynomial-size Frege proofs. 
\end{proof} 

\subsection{Proof Complexity of Edge Clique Cover} 

In this section we study the proof complexity of the following problem: 

\begin{definition}[Edge Clique Cover] 
Given graph $G$ and integer $k$, to decide is whether one can find sets of vertices $V_{1},V_{2},\ldots, V_{k}\subseteq V$ s.t. each $V_i$ induces a clique, and for every edge $e=(v,w)\in E$ there exists $1\leq i \leq k$ s.t. $v,w\in V_{i}$ ("each edge is covered by some clique"). We represent instance $(G,k)$ of Edge Clique Cover by propositional formula $\Phi_{G,k}$ as follows: 
\begin{description} 
\item[-] For every pair of distinct vertices $v,w\in V$ define a variable $Y_{v,w}$. For every edge $(v,w)\in E(G)$ add unit clause $Y_{v,w}$. For $(v,w)\not \in E(G)$ add unit clause $\overline{Y_{v,w}}$. 
\item[-] For $v\in V$ and $1\leq i\leq k$  define boolean variable $X_{v,i}=TRUE$ iff $v\in V_{i}$. 
\item[-] For $v,w\in V$ and $1\leq i \leq k$ add $\overline{X_{v,i}}\vee \overline{X_{w,i}}\vee Y_{v,w}$ ("if $v,w\in V_{i}$ then $vw\in E(G)$") and $\overline{Y_{v,w}}\vee (\vee_{j=1}^{k} (X_{v,j}\wedge X_{w,j}))$. Of course, as written above the latter formula is not CNF, but it can be converted easily by expanding the last disjunction. 
\end{description} 
\end{definition} 

The following is our result for the Edge Clique Cover problem. The main technical novelty is reducing the length of the data reduction chain (compared to the usual kernelization) from linear to logarithmic, so that we can get quasipolynomial-size Frege proofs: 

\begin{theorem} 
There exists a kernelization that reduces  instances $(G,k)$ of problem EDGE CLIQUE COVER with graph $G$ having $n$ vertices to a kernel with at most $2^k$ nodes. The length of the data reduction chain is $O(\log_{1+\frac{1}{2^{k}-1}}(n))$. The soundness of each reduction step can be witnessed by polynomial size Frege proofs. Consequently, for fixed $k$, negative instances $(G,k)$ of EDGE CLIQUE COVER have extended Frege proofs of polynomial size and Frege proofs of quasipolynomial size in $|\Phi_{G,k}|$. 
\label{thm-edgecover}
\end{theorem}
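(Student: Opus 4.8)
The plan is to instantiate the second clause of Theorem~\ref{thm-main} (the Frege branch), so the argument has three ingredients: (i) exhibit a kernelization of EDGE CLIQUE COVER whose kernel has at most $2^k$ vertices; (ii) reorganize the reduction so that the data reduction chain has length $O(\log_{1+1/(2^k-1)}(n))$; and (iii) show that the soundness of each reduction step is witnessed by polynomial-size Frege proofs. The starting point is the classical kernelization of Gramm, Guo, H\"uffner and Niedermeier: after cleaning up isolated vertices and vertices of degree $0$, one repeatedly finds two vertices $u,v$ that are \emph{twins} with respect to the neighborhoods already covered — more precisely, one uses the rule that if edge $(u,v)$ is covered and the closed neighborhoods $N[u]$ and $N[v]$ coincide, then $u$ can be deleted and any solution lifted back by adding $u$ to every clique containing $v$. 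It is a standard fact that if none of these rules applies, then $G$ has at most $2^k$ vertices (each vertex is determined by the subset of the $k$ cliques it belongs to), which gives the kernel bound.

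The technical novelty required by the theorem is the logarithmic chain length. Applying the twin-removal rule one vertex at a time gives a chain of length $n - 2^k = O(n)$, which only yields subexponential Frege proofs after the blow-up of clause (1), not the quasipolynomial bound we want. Instead I would apply all available twin-deletions \emph{in parallel} in a single reduction step: partition the current vertex set into equivalence classes under the "same covered closed neighborhood" relation, and from each class of size $s$ delete $s-1$ vertices at once. The point is that as long as the instance is not yet a kernel, it has more than $2^k$ vertices, hence (pigeonhole on the $\le 2^k$ possible neighborhood-types among vertices that are not yet isolated) at least two vertices collide, so at least one vertex is deleted; in fact a counting argument shows that a constant fraction $1/(2^k-1)$ of the "excess" above $2^k$ is removed each round, so after $O(\log_{1+1/(2^k-1)}(n))$ rounds the instance has shrunk to the kernel. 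This "batched" reduction rule is what produces the claimed chain length; I would state it as a single data reduction rule $A$ (together with the isolated-vertex cleanup rule) and feed $C(x,k) = O(\log_{1+1/(2^k-1)}(n)) = O(\log(|\Phi_{G,k}|))$ into part~(2) of Theorem~\ref{thm-main}.

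For the soundness proofs I would argue, for a single batched step $\Phi(G,k) \vdash \Phi(G',k')$, that the two directions of the equivalence have short Frege derivations. The easy direction ($G'$ unsatisfiable $\Rightarrow$ $G$ unsatisfiable, i.e. deriving the clauses of $\Phi(G',k)$ from those of $\Phi(G,k)$ after the appropriate substitution $X'_{v,i} := X_{v,i}$ for surviving vertices) is essentially immediate since $\Phi(G',k)$'s clauses are a subset of $\Phi(G,k)$'s clauses on the surviving variables — here the twin structure is only needed to see that the $Y$-unit-clauses are consistent, which is a syntactic check. The substantive direction requires, given a covering of $G$, to produce a covering of $G'$; propositionally we substitute, for a deleted vertex $u$ with surviving twin $v$, the value of $X_{u,i}$ by $X_{v,i} \vee (\text{the original } X_{u,i})$, or more simply reuse $v$'s clique memberships, and then verify clause-by-clause that (a) each $V'_i$ still induces a clique in $G'$ and (b) each edge of $G'$ is still covered. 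Each of these is a bounded-width propositional consequence of the defining clauses of $\Phi(G,k)$ together with the twin condition $N[u]=N[v]$ (which is itself a conjunction of $Y$-unit clauses), so each individual implication has a Frege proof of size polynomial in $|\Phi_{G,k}|$; summing over the (at most $n$) deleted vertices in one batch keeps the per-step proof polynomial, using Lemma~\ref{foo-size} to package the needed unsatisfiability certificates into clause derivations. With $R=1$ (no case analysis is needed), part~(2) of Theorem~\ref{thm-main} then delivers quasipolynomial-size Frege proofs, and part~(1) delivers polynomial-size extended Frege proofs, for fixed $k$.

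The main obstacle I anticipate is not the logic but getting the batched reduction rule to simultaneously (a) remove enough vertices per round to guarantee the $O(\log n)$ chain length with the stated base $1+1/(2^k-1)$, and (b) remain a \emph{single, deterministically specified} data reduction rule whose one-step soundness has a genuinely uniform polynomial-size Frege proof — in particular one must be careful that deleting many twins at once does not create fresh twin-collisions that a clean round-by-round analysis would need to account for, and that the bookkeeping of "which surviving vertex each deleted vertex is merged into" is fixed by a canonical choice (e.g. the least-indexed representative of each neighborhood-type) so that the substitution witnessing the implication is explicit. A secondary subtlety is handling the non-CNF clause $\overline{Y_{v,w}} \vee \bigvee_{j=1}^k (X_{v,j}\wedge X_{w,j})$ in the encoding: after the standard CNF expansion this becomes $2^k$ clauses per edge, which is fine for fixed $k$ but must be accounted for when we claim the per-step proof is polynomial in $|\Phi_{G,k}|$ rather than in $|G|$.
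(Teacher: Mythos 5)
Your proposal takes essentially the same route as the paper's: isolated-vertex cleanup, twin contraction keyed on closed neighborhoods, a pigeonhole over the at most $2^k$ clique-membership profiles for the kernel bound, and geometric shrinkage of the vertex set for the logarithmic chain length. The one place you diverge is the batching. The paper's rule contracts a \emph{single} twin class $S$ of size at least $n/2^k$ per round --- this is exactly what the pigeonhole in Lemma~\ref{safe-ecc} supplies under the assumption of a size-$k$ cover --- which immediately gives the decay $n \mapsto n(1-2^{-k})+1$ and the stated chain length with no further bookkeeping. Your all-classes-at-once variant is also fine, but the counting claim you defer to (``a constant fraction $1/(2^k-1)$ of the excess is removed'') is left unproved and is in fact more than you need: deleting one member of a twin pair neither creates nor destroys any other twin relation, so a single full batched round already yields a twin-free graph, which makes the worry about ``fresh twin-collisions'' moot. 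The paper's single-class-per-round version is cleaner both to state as one data reduction rule and to simulate propositionally, via the substitution $X'_{S,i} := X_{s,i}$ for a fixed representative $s \in S$ (the contracted vertex being denoted $S$). Finally, for plugging into Theorem~\ref{thm-main} you only need the derivation $\Phi(G,k) \wedge \Xi_{S}(G) \vdash \Phi(G',k')$ (your ``easy direction,'' with $\Xi_{S}(G)$ the conjunction of $Y$-biconditionals asserting that $S$ consists of mutual twins); the ``substantive direction'' you describe plays no role in the unsatisfiability proof and can be dropped.
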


\begin{proof} 

We use the following data reduction rules: 
\begin{itemize} 
\item[(a).] If $| Isolated(G) | \geq \frac{n}{2^{k}}$ then reduce $(G,k)$ to $(G\setminus Isolated(G),k)$. 
\item[(b).] If there exists a set $S\subseteq V$, $|S|\geq  \frac{n}{2^{k}}$ such that 
vertices in $S$ induce a clique in $G$, and for all $v,w\in S$ we have N[v]=N[w], where $N[v]$ stands for \emph{the closed neighborhood} of $v$, then reduce $G$ to $(G^{\prime},k^{\prime})$, where $G^{\prime}$ is the graph obtained by identifying vertices $v,w$, and $k^{\prime}=k$ whenever $N[v]=N[w]\neq \emptyset$, $k^{\prime}=k-1$, otherwise. 
\end{itemize} 

\begin{lemma} 
Rules (a). (b). are safe. Also, for every graph $G$ with $n>2^{k}$ vertices one of rules (a). (b). applies. \label{safe-ecc}
\end{lemma}



\end{proof}

\subsection{Proof Complexity of the Hitting set problem}

In the $d$-Hitting Set problem we are given an universe $U$ and a family $\mathcal{A}$ of subsets of $U$, all of cardinality at most $d$, as well as an integer $k$. To decide is whether there exists a set $H\subseteq U$ containing at most $k$ elements, such that $H$ intersects every $P\in \mathcal{A}$. 

A formalization of the $d$-Hitting set problem as an instance of SAT is obtained as follows: 

\begin{example} 
Let $P=(U,\mathcal{A},k)$ be an instance of $d$-Hitting set. Define formula $\Phi_{P}$ by: 
\begin{description} 
\item[-] For $i\in U$, $j=1,\ldots, k$ add variable $X_{i,j}$, TRUE iff $i$ is the $j$'th chosen element. 
\item[-] For $i\neq i^{\prime} \in U$, $1\leq j\leq k$ add clauses $\vee_{i\in U} X_{i,j}$ ("some $i$ is the $j$'th chosen element") and $\overline{X_{i,j}}\OR \overline{X_{i^{\prime},j}}$ ("at most one $i$ can be the $j$'th chosen element"). 
\item[-] For $A\in \mathcal{A}$ add
$
(\vee_{i\in A}(\vee_{j=1,\ldots k} X_{i,j}))$ ("some element of $A$ is among the $k$ chosen elements")

\end{description} 
\end{example} 

Our result, which only guarantees  polynomial size \emph{extended} Frege proofs, is: 

\begin{theorem} There exists a kernelization mapping instances $(U,\mathcal{A},k)$ of  $d$-HittingSet with $|U|=n$ elements to a kernel with at most $d\cdot d! \cdot k^d$ sets (hence at most $d^2\cdot d! \cdot k^d$ elements). The data reductions chains in this kernelization have length $O(n^d/k)$, and their soundness can be witnessed by polynomial-size Frege proofs. Hence for every fixed $k,d$, unsatisfiable instances $\Phi_{(U,\mathcal{A},k)}$ of $d$-HittingSet have extended Frege proofs of size $O(poly(|\Phi_{(U,\mathcal{A},k)}|))$. 
\label{theorem-hittingset}
\end{theorem}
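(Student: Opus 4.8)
The plan is to run the classical sunflower-lemma kernelization for $d$-Hitting Set and to notice that the safety of its single non-trivial reduction rule is, at bottom, the (functional) pigeonhole principle, which admits short propositional derivations; feeding this into Theorem~\ref{thm-main}(1) then yields the bound. Recall that a \emph{sunflower} with $t$ petals in $\mathcal{A}$ is a subfamily $A_{1},\dots,A_{t}$ with common core $Y=\bigcap_{\ell}A_{\ell}$ whose petals $A_{\ell}\setminus Y$ are pairwise disjoint. The kernelization uses the rule: if some cardinality class $\mathcal{A}^{(j)}=\{A\in\mathcal{A}:|A|=j\}$ contains a sunflower $A_{1},\dots,A_{k+1}$ with core $Y$, then delete $A_{1},\dots,A_{k+1}$ from $\mathcal{A}$ and add $Y$; together with the obvious cleanup rule deleting elements of $U$ lying in no member of $\mathcal{A}$ (performed in bulk). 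By the Erd\H{o}s--Rad\'{o} sunflower lemma, $|\mathcal{A}^{(j)}|>j!\cdot k^{j}$ forces such a sunflower to exist, so on a kernel $|\mathcal{A}^{(j)}|\le j!\cdot k^{j}$ for every $j\le d$; hence a kernel has at most $\sum_{j=1}^{d}j!\cdot k^{j}\le d\cdot d!\cdot k^{d}$ sets and at most $d^{2}\cdot d!\cdot k^{d}$ (relevant) elements.

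Next I would bound the reduction chain. Each application of the sunflower rule replaces $k+1$ sets by one, so $|\mathcal{A}|$ decreases by at least $k$ per step; since initially $|\mathcal{A}|\le\sum_{j\le d}\binom{n}{j}=O(n^{d})$ and the bulk cleanup adds only one step, every chain has length $O(n^{d}/k)$, which is $O(poly(|\Phi_{(U,\mathcal{A},k)}|))$ for fixed $k,d$ (indeed $|\Phi_{(U,\mathcal{A},k)}|=\Omega(n^{2})$ already from the ``at most one chosen element per index'' clauses).

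The technical heart is simulating one reduction step $(U,\mathcal{A},k)\to(U',\mathcal{A}',k)$ by polynomial-size Frege proofs. For the sunflower step, every clause of $\Phi_{(U',\mathcal{A}',k)}$ except the covering clause of the new core set $Y$ occurs verbatim among the clauses of $\Phi_{(U,\mathcal{A},k)}$ and is copied; the only content is deriving $\bigvee_{i\in Y}\bigvee_{j=1}^{k}X_{i,j}$. By Lemma~\ref{foo-size} it suffices to refute $\Phi_{(U,\mathcal{A},k)}\AND\bigwedge_{i\in Y,\,j\in[k]}\overline{X_{i,j}}$ by a short proof. Under those unit assumptions each petal clause $\bigvee_{i\in A_{\ell}}\bigvee_{j}X_{i,j}$ collapses to $\bigvee_{j\in[k]}E_{\ell,j}$, where $E_{\ell,j}:=\bigvee_{i\in A_{\ell}\setminus Y}X_{i,j}$ has size at most $d$, while the ``at most one chosen element per index'' clauses together with the pairwise disjointness of the petals give $\overline{E_{\ell,j}}\vee\overline{E_{\ell',j}}$ for all distinct $\ell,\ell'\in[k+1]$ and $j\in[k]$; these are exactly the axioms of $\neg PHP^{k+1}_{k}$ with atoms $E_{\ell,j}$, a constant-size contradiction for fixed $k$ (and refutable by polynomial-size Frege proofs in general, by Buss's theorem on the propositional pigeonhole principle). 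The cleanup step is easier still: one uses the substitution sending the variable of a fixed surviving element $i_{1}$ to $X_{i_{0},j}\vee X_{i_{1},j}$, after which the ``some element is the $j$-th chosen'' clauses of the reduced formula become axioms of $\Phi_{(U,\mathcal{A},k)}$ and the remaining reduced clauses follow by weakening and small Boolean manipulation. In all cases there is no case split, so the branching parameter $R$ of Theorem~\ref{thm-main} equals $1$, and the soundness of each step is witnessed by Frege proofs of size $h(|\Phi_{(U,\mathcal{A},k)}|)$ for a fixed polynomial $h$.

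Finally I would invoke Theorem~\ref{thm-main}(1): with $R=1$ the factor $\sum_{i=0}^{C(x,k)}R^{i}$ is just $C(x,k)+1=O(poly(|\Phi_{(U,\mathcal{A},k)}|))$, the term $h(|\Phi_{(U,\mathcal{A},k)}|)$ is polynomial, and --- since for fixed $k,d$ the kernel size $g(k)=d^{2}\cdot d!\cdot k^{d}$ is a constant --- the brute-force term $2^{O(poly(g(k)))}$ contributes only a constant; multiplying out gives extended Frege proofs of size $O(poly(|\Phi_{(U,\mathcal{A},k)}|))$. I expect the main obstacle --- and the reason the bound is only for \emph{extended} Frege --- to be the length of the reduction chain: it is genuinely polynomial rather than $O(\log n)$, so Theorem~\ref{thm-main}(2) does not apply, and collapsing the polynomially many substitution-based steps of part~(1) into an ordinary Frege proof would require unfolding nested definitions and risks an exponential blow-up. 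Batching many sunflower removals into $O(\log n)$ rounds (as in our Edge Clique Cover argument, Theorem~\ref{thm-edgecover}) seems hard here because distinct sunflowers in $\mathcal{A}$ may share petals; whether the formulas $\Phi_{(U,\mathcal{A},k)}$ admit short \emph{Frege} proofs I would leave open.
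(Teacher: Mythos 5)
Your proof is correct and follows the same route as the paper: the sunflower-lemma kernelization with a reduction chain of length $O(n^d/k)$, deriving the core-covering clause $\bigvee_{i\in Y}\bigvee_{j}X_{i,j}$ via Lemma~\ref{foo-size} by reducing it to a pigeonhole refutation, and appealing to Theorem~\ref{thm-main}(1) with $R=1$. If anything your account is a bit cleaner than the one in the Appendix: the paper's phrasing, that substituting $X_{i,j}$ (for $i$ in a petal) by a fresh variable $Y_{i,j}$ yields a formula ``isomorphic to $PHP^{k}_{k+1}$,'' is only heuristic as written; your choice of taking the PHP atoms to be the petal disjunctions $E_{\ell,j}=\bigvee_{i\in A_{\ell}\setminus Y}X_{i,j}$ and deriving the at-most-one axioms $\overline{E_{\ell,j}}\vee\overline{E_{\ell',j}}$ from pairwise petal disjointness plus the functionality clauses is the precise form of the same observation, and your explicit treatment of the cleanup step by a variable merge is a detail the paper leaves implicit.
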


\begin{proof} 
We employ the standard kernelization of $d$-Hitting set based on sunflowers: 

\begin{definition} 
A \emph{sunflower with $k$-petals and core $Y$} is a colection of sets $S_{1}, \ldots, S_{k}$, all different from $Y$, such that for $1\leq i<j\leq k$, $S_i \cap S_j = Y$. 
\end{definition} 

We are going to propositionally encode the following informally stated data reduction rule:  let $(U, \mathcal{A}, k)$ be an instance of the $d$-Hitting set such that $\mathcal{A}$ contains a sunflower $\mathcal{S}=\{S_{1},S_{2},\ldots, S_{k+1}\}$ of cardinality $k+1$ with core $Y$. We reduce $(U,\mathcal{A},k)$ to the instance $(U^{\prime}, \mathcal{A}^{\prime}, k)$, where 
$\mathcal{A}^{\prime}=(\mathcal{A}\setminus \mathcal{S}) \cup \{Y\}$  and $U^{\prime}=\cup_{X\in \mathcal{A}^{\prime}} X$.  Indeed, consider a hitting set $H$ for $(U,\mathcal{A},k)$. By definition, $H$ meets every element of $\mathcal{A}\setminus \mathcal{S}$. If $H$ did not meet $Y$ then it would have to meet each of the $k+1$ disjoint petals $S_{j}\setminus Y$. Hence $|H|\leq k$ iff $H$ meets $Y$. To simulate this argument propositionally, see the Appendix. 
\end{proof}

\section{Proof Complexity of principles in Computational Social Choice} 
\label{sec:choice} 

A great number of applications come from the theory of Social Choice: motivated by pioneering work of \cite{tang2009computer}, a significant amount of research in Artificial Intelligence has investigated the provability of such results in logical settings (see \cite{geist2017computer} for a recent survey). We show that the most interesting of these results (Arrow's theorem and the Gibbard-Satterthwaite theorem) have proof complexity counterparts: the unsatisfiability of formulas encoding them can be certified by Frege proofs of subexponential length. 
A first example of application is Arrow's Theorem. The formulas encoding the nonexistence of a social welfare function satisfying the conditions of Arrow's theorem are rather large. Nevertheless, such an encoding exists, and was used explicitly in \cite{tang2009computer}  to give a computer-assisted proof of  Arrow's theorem\footnote{For encodings of Arrow's Theorem in more powerful logical frameworks 
see \cite{grandi2009first,cina2015syntactic}}: 
\begin{definition} 
Consider an instance with $n$ agents and $m$ objects to rank. There are $(m!)^{n}$ possible profiles for the complete rankings of the $m$ objects, and $m!$ possible aggregate orderings of the $m$ objects. 
Formula $Arrow_{m,n}$ (unsatisfiable for $m,n\geq 3$) has $(m!)^{n+1}$ variables $X_{R,\pi}$, one for each possible pair $(R,\pi)$ consisting of ranking profile $R$, and an aggregate ordering $\pi \in S_m$. The constraints are the following: 
\begin{description} 
\item[-] For every $R\in \mathcal{R}$ and $\pi_{1}\neq \pi_{2}\in S_m$ add clauses $\bigvee_{\pi \in S_m} X_{R,\pi }$
("every profile is aggregated to some ordering") and $\overline{X_{R,\pi_1}}\vee \overline{X_{R,\pi_2}}$ ("no profile is aggregated to more than one ordering")
\item[-] For $i=1,\ldots, n$ we add to $Arrow_{m,n}$ clauses $\bigvee_{R\in \mathcal{R}} \overline{X_{R,R_{i}}}$. These forbid aggregations that always output the ordering given by the $i$'th agent, i.e. dictatorial rank aggregations. 
\item[-] For every two objects $a,b$ let $S^{m}_{a,b}$ be the set of orderings $\pi$ where for all $i=1,\ldots n$, $\pi^{-1}(a)<\pi^{-1}(b)$ (i.e. $a$ is preferred to $b$ in ordering $\pi$). Let $\mathcal{R}_{a,b}$ be the set of profiles such that for every $i=1,\ldots, n$, $R_{i}\in S^{m}_{a,b}$ (i.e. all agents prefer $a$ to $b$). For every $R\in \mathcal{R}_{a,b}$ add to $Arrow_{m,n}$ clauses $\bigvee_{\pi\in S^{m}_{a,b}} X_{R,\pi}$. 
These constraints encode unanimity (if all agents prefer object $a$ to $b$ then $a$ is preferred to $b$ in the aggregated ranking). 
\item[-] For all profiles $R,R^{\prime}\in \mathcal{R}$ and objects $a,b$ such that all players rank $a,b$ in the same way in both $R,R^{\prime}$ and all pairs $\pi_{1},\pi_{2}\in S^{m}$ that rank $a,b$ in a different way (i.e. $\pi_{1}^{-1}(a)<\pi_{1}^{-1}(b)$ but $\pi_{2}^{-1}(a)>\pi_{2}^{-1}(b)$ or viceversa) we add to $Arrow_{m,n}$ clauses $\overline{X_{R,\pi_1}}\vee \overline{X_{R^{\prime},\pi_2}}$. These encode independence of irrelevant alternatives (if $R,R^{\prime}$ coincide with respect to the relative ordering of $a,b$ then their aggregate orderings also rank $a,b$ in the same way). 

\end{description} 
\label{def-arrow}
\end{definition} 
Results in \cite{tang2009computer} yield a kernelization for $Arrow_{m,n}$ with a reduction chain of length $O(m+n)$. We improve them by providing a kernelization with reduction chains whose length only depends on $n$, implying the existence of polynomial size Frege proofs for constant values of $n$:

\begin{theorem} 
Formulas $Arrow_{m,n}$ have a kernelization with data reduction chains of length $\leq C(n+1)$, with constant  $C$  independent from $m,n$, whose safety is witnessed by polynomial-size Frege proofs. 
Hence (a) formulas $Arrow_{m,n}$ have Frege proofs of size quasipolynomial in $|Arrow_{m,n}|)$. (b). For every fixed $n\geq 3$ there exists a polynomial $p_{n}(\cdot)$ such that for all $m,n\geq 3$ formulas $Arrow_{m,n}$ have Frege proofs of size at most $p_{n}(|Arrow_{m,n}|)$. \label{theorem-arrow} 
\end{theorem}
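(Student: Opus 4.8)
The plan is to derive this from our metatheorem, Theorem~\ref{thm-main}. We will exhibit a kernelization of $Arrow_{m,n}$ using two data reduction rules, argue that on a negative instance $(m,n)$ the resulting data reduction chain has length $O(n)$ (hence at most $C(n+1)$ for an absolute constant $C$), check that the soundness of each reduction step is witnessed by Frege proofs of size polynomial in $|Arrow_{m,n}|$, and note that the kernel is a single fixed unsatisfiable formula, refutable by brute force in $O(1)$ time. Plugging a chain length $C(x,k)=O(n)$, bounded case-split width, a polynomial bound $h(\cdot)$, and constant kernel size into Theorem~\ref{thm-main} then delivers Frege proofs polynomial in $|Arrow_{m,n}|$ for each fixed $n$ and quasipolynomial in general; here one uses $|Arrow_{m,n}|\ge (m!)^{n+1}\ge 6^{n+1}$, so that $n=O(\log|Arrow_{m,n}|)$.

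The two rules are: (i) an \emph{object reduction} which, whenever $m\ge 4$, replaces $(m,n)$ by $(3,n)$ in a single step; and (ii) the \emph{agent reduction} of~\cite{tang2009computer}, which for $n\ge 4$ replaces $(m,n)$ by $(m,n-1)$ by forcing two agents to submit identical rankings. A negative instance $(m,n)$ is reduced once by (i) and then $n-3$ times by (ii), reaching the kernel $Arrow_{3,3}$; so the chain has length $O(n)$. This is precisely where we improve on~\cite{tang2009computer}, whose object reduction decreases $m$ by one per step and hence yields chains of length $\Theta(m+n)$: collapsing $m$ objects to $3$ in one move is not a safe syntactic simplification but a genuine mathematical step, and it is the tree-of-reductions / case-split feature of Theorem~\ref{thm-main} that allows us to carry it out.

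For the object-reduction step we use the two-way case split $\Xi=\Xi_{1}\vee\neg\Xi_{1}$, where $\Xi_{1}$ asserts that the \emph{restricted aggregator} $W'$ — obtained by canonically extending a three-object profile to $[m]$ (objects $4,\dots,m$ ranked below $1,2,3$ in the fixed order), applying $W$, then restricting the outcome to $\{1,2,3\}$ — is non-dictatorial, with substitution $Y_{R',\pi'}:=\bigvee_{\pi\,:\,\pi|_{\{1,2,3\}}=\pi'}X_{R(R'),\pi}$, where $R(R')$ is the canonical extension of the three-object profile $R'$. In the case $\Xi_{1}$: every clause of $Arrow_{3,n}[Y]$, after unfolding the substitution, is either literally a clause of $Arrow_{m,n}$ (the ``some outcome'', ``at most one outcome'', unanimity and IIA clauses) or is precisely one of the conjuncts of the hypothesis $\Xi_{1}$ (the non-dictatorship clauses), so all are derivable with total Frege proof size polynomial in $|Arrow_{m,n}|$. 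In the case $\neg\Xi_{1}$, which states that \emph{some} agent $a$ dictates $W'$, we derive the empty clause: IIA (an axiom group of $Arrow_{m,n}$) first promotes the statement that $a$ dictates $W'$ to the statement that $a$ dictates the relative ranking of $\{1,2,3\}$ in every profile over $[m]$, and then the \emph{field expansion lemma} upgrades decisiveness of $a$ over one pair to decisiveness over all ordered pairs of objects, i.e. $a$ is a dictator for $W$ — contradicting the non-dictatorship clause $\bigvee_{R}\overline{X_{R,R_{a}}}$ of $Arrow_{m,n}$; carrying this out for each of the $n$ agents and resolving against $\neg\Xi_{1}$ (the disjunction over $a$ of ``$a$ dictates $W'$'') closes the case in polynomial total size. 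The agent-reduction step is handled in the same spirit — its only delicate case, namely the merged aggregator becoming dictatorial, being dispatched by the very same field-expansion argument applied to the two-element coalition.

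The crux, and the step I expect to be the main obstacle, is the polynomial-size Frege formalization of the field expansion lemma and of the underlying pivotal-voter constructions: one must verify that each of the finitely many auxiliary profiles used in the textbook argument is \emph{definable} from the ambient profile by an explicit bounded-depth boolean formula, and that the IIA and unanimity axioms then force the required decisiveness conclusions clause by clause, with the total over all $O(m^{2}n)$ invocations staying polynomial in $|Arrow_{m,n}|$. Everything else — enumerating the polynomially many clauses of $Arrow_{3,n}[Y]$, verifying $\Xi_{1}\vee\neg\Xi_{1}$ is a tautology with a short proof, and feeding $C(x,k)=O(n)$, constant case-split width, and constant kernel size into Theorem~\ref{thm-main} — is routine bookkeeping.
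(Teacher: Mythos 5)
Your overall plan matches the paper's: one big object-reduction step followed by $O(n)$ agent reductions, assembled through the case-split feature of Theorem~\ref{thm-main}, a kernel of constant size, and the observation that $n=O(\log|Arrow_{m,n}|)$. But the way you justify the object-reduction step is genuinely different from the paper's Lemma~\ref{safe-a}, and it leaves a gap that the paper's device is specifically designed to avoid.

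You reduce to $m=3$ alternatives and split on whether the restricted aggregator $W'$ is dictatorial. In the dictatorial branch you want the empty clause, and to obtain it you must prove, inside the Frege system, that a dictator of $W'$ is a dictator of $W$ --- this is the full field-expansion / pivotal-voter core of Arrow's theorem, and you yourself flag its propositional formalization as ``the main obstacle'' without carrying it out. That is a genuine gap, not merely a detail to be filled in: the point of the kernelization framework is precisely to sidestep formalizing the mathematically hard part of the impossibility proof, and leaving exactly that part unverified defeats the purpose. The paper avoids the issue by reducing to $m=5$ rather than $m=3$ (hence requiring $m\geq 6$), and by \emph{never proving the dictatorial branch impossible}. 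In Lemma~\ref{safe-a}, if the canonical restriction $W_{-U}$ turns out to have dictator $1$, the paper does not derive a contradiction; instead it picks a second $5$-element complement $\overline{V}$ built around an explicit witness pair $(c,d)$ for agent $1$'s non-dictatorship of $W$ together with two fresh objects $a,b$, and shows directly that no agent dictates $W_{-V}$. Both branches of the case split therefore yield a formula isomorphic to $Arrow_{5,n}$, and the safety argument needs only a handful of explicitly constructed profiles and a few applications of IIA per agent --- all readily simulated by polynomial-size Frege proofs, with no field expansion in sight. The two extra alternatives ($5$ versus your $3$) are exactly the elbow room required for the $(a,b,c,d,x,y)$ construction.

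Two smaller points. First, your claim that the agent-reduction step is ``dispatched by the very same field-expansion argument applied to the two-element coalition'' does not fit: the Tang--Lin safety argument that the paper simulates for rule (b) is a finite case analysis over the possible dictators of $W_{1,2},W_{1,3},W_{2,3}$, not a decisiveness-expansion argument, and there is no two-element coalition whose decisiveness you would expand. Second, even granting the mathematics of your dictatorial branch, going from ``decisive over one pair in an extreme profile'' to ``dictator of $W$'' needs the strong-decisiveness (contagion) step in addition to field expansion, so the ``routine bookkeeping'' cannot begin until both are spelled out in the explicit $X_{R,\pi}$ encoding.
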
 

\begin{proof} 
The kernelization has two data reduction rules, described informally as follows: 
\begin{description} 
\item[(a).] If $n\geq 2, m\geq 6$ and $W:[S_{m}]^{n}\rightarrow [S_{m}]$ is a function that is non-dictatorial, IIA and unanimous then there exists an $T\subseteq [m] $, $|T|=m-5$ such that $W_{-T}:[S_{[m]_{-T}}]^{n}\rightarrow [S_{[m]_{-T}}]$ has the same properties. In other words, one can reduce in one step the set of alternatives from $[m]$ (which has $m$ elements) to $[m]_{-T}$ (which has $5$). 
\item[(b).] See \cite{tang2009computer}: If $n,m\geq 3$ and $W:[S_{m}]^{n}\rightarrow [S_{m}]$ is non-dictatorial, IIA and unanimous then one of functions $W_{1,2},W_{1,3},W_{2,3}:[S_{m}]^{n-1}\rightarrow [S_{m}]$ defined by $W_{i,j}(R_1, R_{2},\ldots, \widehat{R_{i}},\ldots, R_{n})$ $=  W(R^{\prime}_{1}, \ldots,R^{\prime}_{n})$ is non-dictatorial, IIA and unanimous.  Here $R^{\prime}_{i}=R_{j}$, $R^{\prime}(k)=R_{k}, k\neq i$. In other words, one can reduce in one step the number of agents by one. 
\end{description} 

\begin{lemma} If $W$ is unanimous, IIA and non-dictatorial then for every $B\subseteq [m]$, function $W_{-B}$ is unanimous and IIA. 
\label{arrow-red}
\end{lemma} 

\begin{lemma} 
Reduction (a). is safe. 
\label{safe-a}
\end{lemma} 
\begin{proof} 
Consider an arbitrary set $T\subseteq [m]$ of cardinality $m-6$, e.g. $T=\{7,\ldots, m\}$. Let $x\not \in T$, e.g. $x=6$ and $U=T\cup \{x\}$. If $W_{-U}$ is non-dictatorial we are done. Otherwise, assume w.l.o.g. that agent $1$ is a dictator for $W_{-U}$. Since $1$ is not a dictator for $W$, there must exist indices $c\neq d\in [m] $ and preference profiles $<_1,\ldots, <_{n}$ on $[m]$ such that $c<_{1}d$ but $d<_{W(<_{1},\ldots,<_{n})}c$. 
Let $y,a,b\not \in T$, different from $x,c,d$, and let $V\subseteq [m]$, $|V|=m-5$, $a,b,c,d,x\not\in V$, $y\in V$. Such a $V$ exists, since $m\geq 6$. Clearly $V\neq U$, since $y\in V\setminus U$. We claim that  function $W_{-V}$ is not dictatorial. 

First note that 1 cannot be a dictator for $W_{-V}$. Indeed, consider $<_{i,-V}$ the restriction of $<_{i}$ to $[m]_{-V}$. We have $c<_{1,-V}d$ but $d<_{W_{-V}(<_{1,-V},\ldots,<_{n,-V})} c$. The first relation holds because $c<_{1}d$ and $c,d\not\in V$. The second relation holds because to compare $c,d$ according to $W_{-V}(<_{1,-V},\ldots,<_{n,-V})$ we apply function $W$ on $(<^{+V}_{1,-V},\ldots,<^{+V}_{n,-V})$. But since $c,d\not \in V$, $<^{+V}_{i,-V}$ coincides with $<_{i}$ with respect to the ordering of $c,d$ for $i=1,\ldots, n$. Invoking the IIA property of $W$ for tuples $(<_1,\ldots, <_{n})$ and $(<^{+V}_{1,-V},\ldots,<^{+V}_{n,-V})$ justifies the second relation.  

Assume now that another agent, say 2, were a dictator for $W_{-V}$. Let $<_{1}, <_{2}$ be preference profiles on $[m]$ s.t. $a<_{1} b$ but $b<_{2} a$, and $<_{3},\ldots, <_{n}$ be arbitrary preference profiles. First, invoking the first relation,  the fact that $W_{-U}$ is computed using $W$, and that 1 is a dictator for $W_{-U}$ we get that $a<_{W_{-U}(<_{1,-U}, <_{2,-U},\ldots, <_{n,-U})} b$. Invoking the IIA property of $W$ on tuples $(<_{1,-U}, <_{2,-U},\ldots, <_{n,-U})$ and $(<_{1},<_{2},\ldots, <_{n})$ we get that $a<_{W(<_{1}, <_{2}, \ldots, <_{n})} b$. Using a similar reasoning for the function $W_{-V}$ we get that $b<_{W(<_{1}, <_{2},\ldots, <_{n})} a$, a contradiction. 
\end{proof} 
The safety of reduction (b) was (mathematically) proved in \cite{tang2009computer}. In the Appendix we outline how to simulate these mathematical arguments using polynomial-size Frege proofs. We further note that $n=O(\log(|Arrow_{m,n}|))$. Invoking our metatheorem yields a proof of point (a) of Theorem~\ref{theorem-arrow}. Point (b) follows by invoking point 2 of the same metatheorem. 

\end{proof} 

As for the Gibbard-Satterthwaite theorem, we use the following formalization: 
\begin{definition} 
Consider an instance with $n$ agents and $m$ objects. There are $(m!)^{n}$ possible profiles for the complete rankings of the $m$ objects, and $m$ possible outcomes. 
Formula $GS_{m,n}$ has $(m!)^{n}\times m$ variables $X_{R,o}$, one for each possible pair consisting of a strategy profile $R$ and a value $o\in [m]$,  the value of the SCF on profile $R$. The constraints are the following: 
\begin{description} 
\item[-] For $R\in \mathcal{R}$ add clauses $\bigvee_{o \in [m]} X_{R,o}$
("every joint profile is aggregated to some object") and  $\overline{X_{R,o_1}}\vee \overline{X_{R,o_2}}$ ("no joint profile is aggregated to more than one object").
\item[-] For $i=1,\ldots, n$ we add to $GS_{m,n}$ clauses $\bigvee_{R\in \mathcal{R}} \overline{X_{R,top(R_{i})}}$. They forbid social choice functions that always output the top preference of the $i$'th agent, i.e. dictatorial aggregations. 
\item[-] For $i=1,\ldots, n$ add $\bigvee_{R\in \mathcal{R}} X_{R,o}$. This eliminates social choice functions that are not onto. 
\item[-] Add, for every pair $(R,o)$,  player $1\leq i\leq m$ and  $\pi\in S_{m}$, 
$$
 \overline{X_{R,o}} \OR (\bigvee_{o^{\prime}\in pr(i,o,R)} X_{s(i,R,\pi),o^{\prime}}).$$ 
These clauses state that  the social choice function is strategyproof. 
\end{description} 
\end{definition} 

\begin{theorem} 
For every fixed $m$ formulas $GS_{n,m}$ have a kernelization of length $O(n)$ whose soundness has polynomial time Frege proofs. Hence, formulas $GS_{n,m}$ expressing the Gibbard-Satterthwaite theorem have (a). Frege proofs of size quasipolynomial in $|GS_{n,m}|$, (b). for every fixed $n$, Frege proofs of size polynomial in $|GS_{n,m}|$. 
\label{theorem-gs}
\end{theorem}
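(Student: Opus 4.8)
\textbf{Proof plan for Theorem~\ref{theorem-gs}.}

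The plan is to mirror, in the Gibbard–Satterthwaite setting, the two-level structure already used for Arrow's theorem in Theorem~\ref{theorem-arrow}, and then to invoke the metatheorem (Theorem~\ref{thm-main}). The first task is to exhibit the kernelization. I would use a single nontrivial data reduction rule on the number of agents: given a strategyproof, onto, non-dictatorial SCF $s:[m]^{n}\to[m]$ with $n\geq 3$, produce an SCF on $n-1$ agents with the same properties, by fixing one agent's ballot to a copy of a neighboring agent's ballot (the Gibbard–Satterthwaite analogue of reduction (b) in the Arrow proof; this is the classical induction step, e.g. via the equivalence with a suitable Arrow-type condition, or directly through the standard pivotal-voter argument restricted to a fixed agent). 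Together with a base case of bounded size ($n$ down to some small constant, say $n=O(1)$ agents over $m$ objects), this gives a reduction chain of length $O(n)$, and the kernel has size depending only on $m$ — which is exactly what the hypothesis of Theorem~\ref{thm-main} asks for with $k$ replaced by $m$.

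Next, I would establish that the soundness of each reduction step is witnessed by polynomial-size Frege proofs. Concretely: if $s$ on $n$ agents is non-dictatorial, onto and strategyproof, but every candidate $(n-1)$-agent restriction fails one of these properties, one derives a contradiction; and the metatheorem framework lets this be a "case analysis with a bounded number of cases" ($R=3$ as in Arrow, one case per choice of which pair of agents gets merged). For each case, the clauses of $GS_{m,n-1}$ (suitably renamed) are derivable from the clauses of $GS_{m,n}$ together with the case hypothesis $\Xi_{i,t}$, because the new variables $X_{R',o}$ of the smaller formula are explicitly definable as old variables $X_{R,o}$ where $R$ extends $R'$ by the duplicated ballot. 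The strategyproofness clauses, the onto clauses, the functionality clauses and the anti-dictatorship clauses of $GS_{m,n-1}$ each follow from their counterparts in $GS_{m,n}$ by short, uniform Boolean manipulations (substitution plus resolution), all of size polynomial in $|GS_{m,n}|$; I would spell out one representative case (say the strategyproofness clause, which is the only non-trivial one since its disjunct $\bigvee_{o'\in pr(i,o,R)}X_{s(i,R,\pi),o'}$ must be shown invariant under the merge) and remark that the others are analogous. As in the Arrow argument, the tautology $\Xi_{i}=\bigvee_{t}\Xi_{i,t}$ asserting "some merge preserves all three properties" is itself provable with a short Frege proof, using the GS analogue of Lemma~\ref{safe-a}.

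Finally I would assemble the pieces. Since $|GS_{m,n}|\geq (m!)^{n}m$, we have $n=O(\log|GS_{m,n}|)$, so a reduction chain of length $O(n)$ is of length $O(\log|GS_{m,n}|)$; invoking point~2 of Theorem~\ref{thm-main} with the logarithmic-length bound and the fixed-polynomial per-step Frege bound yields quasipolynomial-size Frege proofs, giving part~(a). For part~(b), fixing $n$ makes the chain length $O(1)$ (constant depending on $n$), and point~2 of the metatheorem then gives Frege proofs of size polynomial in $|GS_{m,n}|$; the kernel, of size depending only on $m$, contributes an additive $2^{O(\mathrm{poly}(\cdot))}$ term that is absorbed since for fixed $n$ the kernel over $m$ objects is still refutable within the allowed budget (alternatively one notes the kernel is itself $GS_{m,n_0}$ for a constant $n_0$, which is a fixed unsatisfiable formula for each $m$, hence has a refutation of size $2^{O(\mathrm{poly}(m))}=\mathrm{poly}(|GS_{m,n}|)$-bounded contribution). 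The main obstacle I anticipate is the second step: carefully checking that the merge substitution really does carry \emph{every} clause type of $GS_{m,n-1}$ — and in particular the strategyproofness clauses, whose index $s(i,R,\pi)$ interacts with the duplicated ballot — to something Frege-derivable from $GS_{m,n}$ in polynomial size, and making sure the case analysis genuinely has a bounded number of branches independent of $m$ and $n$ so that the factor $\sum_i R^i$ in the metatheorem stays under control.
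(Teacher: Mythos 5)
Your plan omits the object-reduction step, and this is a genuine gap. The paper's kernelization for $GS_{m,n}$ (mirroring the Arrow argument) uses \emph{two} reduction rules: rule (a) reduces the number of objects from $m$ to $3$ in a single step (the paper proves, analogously to Lemma~\ref{safe-a}, that if $W:S_m^n\to[m]$ is onto, strategyproof and non-dictatorial, then there exists $T\subseteq[m]$ with $|T|=m-3$ such that $W_{-T}$ has the same three properties; the ``onto and strategyproof'' part is Lemma~\ref{gs-red} and the ``non-dictatorial'' part is the appendix argument of Section~\ref{sect}); and rule (b) decrements the number of agents via the Tang--Lin three-way case split. You keep only the agent-reduction and claim the resulting kernel $GS_{m,n_0}$ (constant $n_0$) is small enough. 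It is not: that kernel has $\Theta((m!)^{n_0}m)$ variables, so a brute-force refutation of it has size $2^{\Theta((m!)^{n_0}m)}$, while $|GS_{m,n}|=(m!)^{n+1}m$. When $n$ is constant and $m$ grows, $2^{\Theta((m!)^{n_0}m)}$ is neither polynomial nor quasipolynomial in $|GS_{m,n}|$; your estimate ``$2^{O(\mathrm{poly}(m))}$'' for the kernel refutation is off by an order of exponentiation, since the kernel's variable count is already super-exponential (not polynomial) in $m$. Without the object-reduction you therefore cannot obtain part (b) (fixed $n$, varying $m$), and part (a) also fails in the regime where $m$ dominates $n$.

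The remainder of your outline does agree with the paper's strategy: agent-reduction with a bounded ($R=3$) case split, explicit variable substitutions pulling back $GS_{m,n-1}$-variables to $GS_{m,n}$-variables, per-step polynomial-size Frege soundness, and $n=O(\log|GS_{m,n}|)$ to invoke point~2 of Theorem~\ref{thm-main}. What you must add, as an initial reduction step, is the one-shot reduction from $m$ objects to $3$, so that the final kernel $GS_{3,n_0}$ is a genuinely fixed formula of size $O(1)$ independent of both $m$ and $n$; only then do the quasipolynomial (part (a)) and fixed-$n$ polynomial (part (b)) bounds follow from the metatheorem.
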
 

\begin{proof} 
The argument is very similar to that of Arrow's theorem: We use the following two data reductions:  (a). for $n\geq 2, m\geq 4$, 
if $W:[m]^{n}\rightarrow [m]$ is a social choice function that is onto, non-dictatorial and strategy-proof then there exists $T\subseteq [m]$, $|T|=m-3$ such that function $W_{-T}:([m]_{-T})^{n}\rightarrow [m]_{-T}$ is onto, non-dictatorial and strategyproof, and (b). \cite{tang2008computer} if $W:[m]^{n}\rightarrow [m]$ is a social choice function that is onto, non-dictatorial and strategy-proof then one of functions $W_{1,2},W_{1,3},W_{2,3}:[S_{m}]^{n-1}\rightarrow [m]$ defined by $W_{a,b}(R_1, R_{2},\ldots, \widehat{R_{a}},\ldots, R_{n})=  W(R^{\prime}_{1}, \ldots,R^{\prime}_{n})$ is non-dictatorial, onto and strategy-proof.  Here $R^{\prime}_{a}=R_{b}$, $R^{\prime}(k)=R_{k}$ for $k\neq a$. 

\begin{lemma} If $W:S_{m}^{n}\rightarrow [m]$ is onto, strategyproof and non-dictatorial then for every $B\subseteq [m]$, function $W_{-B}$ is onto and strategyproof. 
\label{gs-red}
\end{lemma} 

For mathematical proofs of the safety of reduction rule (a) (for (b) see \cite{tang2008computer}) and some details on the propositional simulations see the Appendix. 
\end{proof} 

\section{Conclusions and open problems} 

We believe that the most important contribution of our paper is to show that several techniques for proving kernelization can sometimes be simulated by efficient extended Frege proofs. The proof techniques in this list includes: crown decomposition, the sunflower lemma, ad-hoc methods.  It is an interesting challenge to enlarge the list of methods and problems that have such a simulation or, conversely, show limits of some of these methods.

It is rewarding to note that methods from algebraic topology  can be used to reframe and extend results in both Topological Combinatorics and Computational Social Choice: Arrow's theorem has topological proofs \cite{chichilnisky1980social,chichilnisky1982topological,baryshnikov1997topological}. On the other hand the original results on Kneser's conjecture \cite{lovasz-kneser} have been strengthened using more advanced topological methods, e.g. \cite{babson2007proof} (see \cite{kozlov-book} for a book-length treatment). The results in \cite{tang2009computer,aisenberg2018short} can be interpreted as stating that in both cases one can bypass topological arguments by purely combinatorial arguments (plus computer-assisted verification of finitely many cases). It is an interesting question whether this is still true for the results requiring more sophisticated topological methods as well. 

In Theorem~\ref{theorem-hittingset} we have only obtained polynomial size extended Frege proofs. Similarly, in Theorems~\ref{theorem-arrow},~\ref{theorem-gs} we have only obtained polynomial-size Frege proofs when the number of agents is fixed. We leave open the issue of improving these results. On the other hand, our results have only touched on the most basic topics on the proof complexity of statements in computational social choice. There has been significant progress in this area (see e.g. \cite{geist2017computer}) and we believe that our framework may be applicable to some of this work (e.g. to the Preservation Theorem of \cite{geist2011automated}). It would be interesting to see if this is really the case. 

Finally, results in this paper motivate the question of developing a proof complexity theory for richer frameworks, e.g. \emph{satisfiability modulo theories} \cite{kroening2016decision}: by employing more powerful background theories many logical statements, including those from computational social choice, could be encoded in more compact and natural ways. Methods based on SMT have recently been indeed used for the automated derivation of statements from computational social choice \cite{brandl2018proving}. For some preliminary steps in developing such a theory of proof complexity for SMT see \cite{robere2018proof}. 

\newpage

\bibliography{/Users/gistrate/Dropbox/texmf/bibtex/bib/bibtheory}

\newpage
\section*{Appendix}

In this section we provide \textbf{sketches} of the proofs ommitted in the main text. 

First of all, we will need the following classical result: 
\begin{lemma}
The number of vectors $(x_{1},\ldots, x_{k})$ of solutions of equation $x_{1}+\ldots + x_{k}=n$ in nonnegative integers is ${{n+k-1}\choose {k-1}}$. 
\label{one}
\end{lemma}
\begin{proof}
Simplest way to see it: consider numbers $x_1+1$, $x_1+x_2+2, \ldots, x_1+\ldots+x_{k-1}+(k-1)$. This gives a subset of $k-1$ numbers in $1,\ldots, n+k-1.$ Conversely, from any set $y_1,y_2,\ldots, y_k$ we generate solution $(x_1,x_2,\ldots x_k)$ with $x_1=y_1-1,x_2=y_1+y_2-2,\ldots, y_k=n - (y_1+\ldots + y_{k-1}).$
\end{proof}

\section{Proof of Theorem~\ref{thm-main}}

\begin{enumerate} 
\item Given an instance $(x,k)$ of $L$, and data reduction chain $(x,k)=(x_{0},k_{0}), (x_{1},k_{1}),$ $\ldots  (x_{m},k_{m})$, an extended Frege proof for $\Phi(x,k)$ is obtained by concatenating the proofs for statements $\Phi(x_{i-1},k_{i-1})\vdash \Phi(x_{i},k_{i})$ with an extended Frege proof of the kernel instance. There is one complication, though, induced by the fact that we allow at most $R$ cases in the reduction: the reduction chain maps to a \textbf{tree} of propositional proofs, since for each node $\Phi(x_{i},k_{i})$ we have $r^{\prime}_{i}\leq R$ children $\Xi_{i,t}$, all isomorphic to $\Phi(x_{i+1},k_{i+1})$ (but different). The total number of nodes in this tree is at most $\sum_{t=0}^{C(x,k)} R^{t}$. 

 Then the whole chain of reductions from $\Phi(x,k)$ to $\Phi(x_{m},k_{m})$ can be proved to be sound by proofs of length $(\sum_{t=0}^{C(x,k)} R^{t}) \cdot h(|\Phi(x,k)|)$. There are at most $R^{C(x,k)}$ copies of the kernel instance. Each of them  can be proved (in brute force) in size $O(2^{|\Phi(x_{m},k_{m})|})=O(2^{poly(g(k))})$, since $(x_{m},k_{m})\in ker(L)$ and any unsatisfiable formula $\Xi$ with $n$ variables has Frege proofs of size $O(2^{n})$.

The length of the total proof is thus $O(\sum_{t=0}^{C(x,k)} R^{t}\cdot [h(|\Phi(x,k)|)+ 2^{poly(g(k))})]$. We infer the desired result when $C(x,k)=O(poly(|\Phi(x,k)|))$. 

\item We unwind the substitions implicit in the extended Frege proofs. For $R=1$ (i.e. a reduction chain), arguing that the blow-up due to making substitutions is quasipolynomial as long as the chain length is logarithmic is identical to similar arguments made in \cite{buss2015quasipolynomial}, \cite{aisenberg2018short} for other problems, and we omit further details. 

In our case the complication arises since we no longer have a chain but a tree. However, we can upper bound the complexity of Frege proofs by $R^{C(x,k)}$ times the complexity of a single chain (a root-to-leaf path in this tree). As long as $C(x,k)=O(\log (|\Phi(x,k)|))$, the term $R^{C(x,k)}$ has a magnitude polynomial in $|\Phi(x,k)|$. Multiplying this polynomial by the quasipolynomial complexity of each chain still yields a proof of complexity quasipolynomial in $|\Phi(x,k)|$. 
\end{enumerate} 

\begin{observation} 
There is an important uniformity aspect of kernelization that we haven't used in the preceding proof: \emph{the fact that data reductions are specified by polynomial time algorithms.} This issue will be important in applying the above result: often the existence of a data reduction is proved by an algorithm whose soundness (for all instances) would be rather cumbersome to simulate in propositional proofs. This is the case when results involve general techniques for developing kernelizations, such as the Crown Decomposition Lemma or the Sunflower Lemma. As long as we do not insist, however, on \emph{actually generating} the proof, but merely on \emph{proving its existence}, we can get away with proving the soundness of individual instances. That is, if we can prove the soundness of \emph{an individual application of a  propositional reduction rule},  $\Phi(x,k)\vdash \Phi(x^{\prime},k^{\prime})$, taking for granted the existence/definition of $(x^{\prime},k^{\prime})$, we can prove the existence of efficient proofs, \textbf{without actually having to generate a propositional proof of the soundness of the reduction techniques.} 
\end{observation}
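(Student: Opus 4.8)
The plan is to revisit the construction used in the proof of Theorem~\ref{thm-main} and make explicit exactly which ingredients it consumes. Given a negative instance $(x,k)$, the kernelization algorithm---being a correct polynomial-time algorithm---produces, as a purely mathematical object, a finite data reduction chain (or, in the branching case, a finite tree) $(x_{0},k_{0}),(x_{1},k_{1}),\ldots,(x_{m},k_{m})$ terminating in a kernel instance. The (extended) Frege proof of $\Phi(x,k)$ built there is nothing more than the concatenation of (i) a short proof of each implication $\Phi(x_{i},k_{i})\vdash\Phi(x_{i+1},k_{i+1})$ along this chain and (ii) a brute-force refutation of the kernel formula $\Phi(x_{m},k_{m})$, glued together by transitivity of $\vdash$ (composition of the implicit substitutions). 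Hence the only demand placed on the reduction rules is: \emph{for each step that actually occurs in this particular chain, a short proof of that particular implication exists.}

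First I would observe that this is precisely---and only---what the hypothesis ``the soundness of $\mathcal{A}$ has (extended) Frege proofs of size $h(\cdot)$'' asserts. That definition quantifies over individual transitions $(x_{i},k_{i})\to(x_{i+1},k_{i+1})$ occurring in a reduction chain, and for each such transition it supplies a tautology $\Xi_{i}$ (with at most $R$ disjuncts) together with short derivations $\Phi(x_{i},k_{i})\AND\Xi_{i,t}\vdash\eta_{i,t}$. Nowhere does it ask for a single propositional formula formalizing a global statement such as ``for every graph $G$, reduction rule (c) is safe'', nor for a uniform procedure that on input $(x_{i},k_{i})$ outputs the proof of the $i$-th step. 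Consequently, to verify the hypothesis one argues locally: one assumes that the combinatorial witness the algorithm uses at step $i$ (a high-degree vertex; a crown decomposition of $\overline{G}$; a sunflower of size $k+1$; a dictatorial sub-function; and so on) is available, and one writes down the short proof of the corresponding implication $\Phi(x_{i},k_{i})\vdash\Phi(x_{i+1},k_{i+1})$ from it---treating $(x_{i+1},k_{i+1})$ simply as the object the algorithm hands back, whose existence and definition we take for granted.

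I would then contrast this with the stronger, and genuinely different, task of \emph{generating} such proofs. If one insisted on a uniform proof-construction procedure, one would in effect have to formalize and verify---inside the proof system, or in an associated bounded arithmetic---the correctness of the kernelization algorithm itself: the Crown Decomposition Lemma (Lemma 4.5 of \cite{fomin2019kernelization}), the Erd\H{o}s--Rad\'o Sunflower Lemma, or the case analyses underlying the social-choice reductions of \cite{tang2009computer,tang2008computer}. That is a considerably heavier commitment, and the content of the Observation is that it can be entirely sidestepped: the applications only ever need the \emph{existence} of subexponential proofs, not a uniform way to produce them. The main---and essentially the only---point that has to be spelled out is this separation of concerns: the kernelization algorithm performs, once and externally, the combinatorial work of locating a reduction witness at each step, while the proof system is asked merely to certify, step by step, that the witness so located justifies its reduction step. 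Once this is made precise, the Observation follows by inspection of the proof of Theorem~\ref{thm-main}.
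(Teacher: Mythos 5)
Your proposal is correct and follows essentially the same line as the paper: the Observation is a commentary on the proof of Theorem~\ref{thm-main}, and you unpack it in exactly the way the paper intends, by noting that the metatheorem's hypothesis only quantifies over individual reduction steps and never requires a propositional formalization of the algorithm's global correctness (e.g.\ the Crown Decomposition or Sunflower lemmas). Since the paper offers no further proof beyond the remark itself, your elaboration by inspection of Theorem~\ref{thm-main} is the intended reading, and nothing is missing.
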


\section{A direct quasipolynomial Upper Bound on the proof complexity of Schrijver's Theorem}

 \begin{theorem} For every fixed $0<\beta<1$ and $k$, Schrijver's Theorem has a kernelization of length $O(\log_{1+\frac{\beta}{k-\beta}}(n))$ whose soundness can be established by polynomial size Frege proofs. Hence, for every fixed $k$ the class of formulas $(Sch_{n,k})_{n\geq 1}$ has extended Frege proofs of size polynomial in $|Sch_{n,k}|$, as well as Frege proofs of size quasipolynomial in $|Sch_{n,k}|$. \label{thm-schrijver} 
 \end{theorem}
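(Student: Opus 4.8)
The plan is to mimic the quasipolynomial-Frege strategy used for the ordinary Kneser--Lov\'asz formulas in \cite{aisenberg2018short}, but applied to the stable Kneser graph and fed into our metatheorem (Theorem~\ref{thm-main}, part 2). The key object is a single data reduction rule of the form $(SKn_{n,k}, a) \longrightarrow (SKn_{n', k}, a')$ with $n' = n - \lceil \beta n / k \rceil$ (roughly) and $a' = a - (n - n')$, i.e.\ a rule that shrinks the ground set by a constant fraction $\Theta(\beta/k)$ in one step while decreasing the coloring parameter by the same amount. Since each application multiplies $n$ by a factor $\le 1/(1 + \frac{\beta}{k-\beta})$, iterating the rule down to a kernel of size $O(1)$ (depending on $k$) takes $O(\log_{1+\frac{\beta}{k-\beta}}(n))$ steps, which is the claimed chain length; once $n$ is below the threshold $N(k)$ the formula $Sch_{n,k}$ is one of finitely many (for fixed $k$) unsatisfiable formulas and is handled by brute force inside the metatheorem. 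The heart of the matter is therefore (i) establishing the combinatorial safety of this one reduction step for the stable graph, and (ii) showing this safety has polynomial-size Frege proofs uniformly in $n$.

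For step (i): given a purported $(n-2k+1)$-coloring of $SKn_{n,k}$, I would argue that after removing a carefully chosen block of $\Theta(\beta n/k)$ consecutive ground-set elements, the stable $k$-subsets of the remaining $n'$ elements that survive still form a copy of $SKn_{n',k}$ (stability is preserved under deleting a contiguous arc, modulo the cyclic wraparound convention, which one must handle by deleting an \emph{arc} rather than an arbitrary set), and that at most $n - n'$ colors can disappear; hence the restricted coloring uses at most $(n-2k+1) - (n-n') = n' - 2k + 1$ colors, contradicting $\chi(SKn_{n',k}) > n'-2k+1$. The technical point is a pigeonhole/averaging argument guaranteeing that one can always pick a contiguous arc whose deletion kills a controlled number of color classes; this is exactly where the parameter $\beta$ enters, trading off the shrink rate against how many colors one is guaranteed to eliminate. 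This is the combinatorial analogue of the "colorful" deletion argument in \cite{aisenberg2018short}, restricted to stable sets, so it should go through with only notational overhead from the stability constraint.

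For step (ii), the propositional simulation: the reduction is of the type $\Phi(x_i,k_i) \wedge \Xi_{i,t} \vdash \eta_{i,t}$ in the sense of our main definition, where $\Xi_i = \bigvee_t \Xi_{i,t}$ is the (constant-size, hence trivially tautological) disjunction over the $O(1)$ choices of which arc to delete, and each $\eta_{i,t}$ is an isomorphic copy of $Sch_{n',k}$. Deriving the clauses of $\eta_{i,t}$ from those of $Sch_{n,k}$ together with the case hypothesis amounts to: identifying color variables, using the unit clauses recording non-edges, and re-deriving the "every vertex gets a color / no two adjacent vertices share a color" clauses of the smaller formula by a linear-size manipulation of the clauses of the larger one plus Lemma~\ref{foo-size}-style bookkeeping. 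Since each such derivation touches only a polynomial number of clauses and the number of cases $R$ is a constant (depending on $k$ but not $n$), the total size $h(|Sch_{n,k}|)$ is polynomial, so part 2 of Theorem~\ref{thm-main} with $C(n,k) = O(\log_{1+\frac{\beta}{k-\beta}}(n)) = O(\log|Sch_{n,k}|)$ yields quasipolynomial-size Frege proofs and (via part 1, $R$ constant) polynomial-size extended Frege proofs.

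The main obstacle I anticipate is \textbf{step (i)}: making the "delete a contiguous arc and lose few colors" averaging argument both correct in the cyclic setting (so that stability — including the $n$-and-$1$ adjacency — is genuinely preserved, which rules out deleting scattered elements and forces an arc) and quantitatively tight enough that the surviving copy is a \emph{full} $SKn_{n',k}$ rather than a proper subgraph, since the metatheorem needs $\eta_{i,t}$ literally isomorphic to the next formula in the family. A secondary concern is confirming that the kernel size $N(k)$ is finite for each fixed $k$ — this follows from Schrijver's theorem itself (the chromatic number grows, so small-$n$ instances are genuinely unsatisfiable), but one should check the reduction rule is applicable for all $n > N(k)$, i.e.\ that the averaging argument does not degrade below the threshold.
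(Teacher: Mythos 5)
Your proposal diverges from the paper's proof in the key combinatorial step, and the divergence introduces a genuine gap.

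The paper does \emph{not} delete a contiguous arc and average. Instead it proves a counting lemma (the appendix's Lemma on the number of star-shaped color classes, built on a cardinality formula for $\binom{n}{k}_{st}$ and an explicit upper bound $k^2\binom{n+k-1}{k-2}$ on the size of any non-star-shaped color class, the latter a weaker relative of Talbot's theorem that the authors reprove by an injective encoding of stable sets as arcs) showing that in any $(n-2k+1)$-coloring of $SKn_{n,k}$ at least $\frac{n\beta}{k}$ color classes must be star-shaped once $n$ exceeds a threshold $N(k,\beta)$. The data reduction then deletes the \emph{centers} of these star-shaped classes --- a scattered set, not an arc --- and relies on the fact that for any $S\subseteq[n]$, stable $k$-sets of the relabeled circle $[n]\setminus S\cong[n-|S|]$ map to stable $k$-sets of $[n]$ disjoint from $S$ (adjacency in the smaller circle only adds constraints), giving an induced-subgraph embedding $SKn_{n-|S|,k}\hookrightarrow SKn_{n,k}$ onto which the coloring restricts. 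So your worry that ``stability forces an arc'' is unfounded: the embedding goes in the direction that tolerates arbitrary deletion sets, and your secondary worry about obtaining a \emph{full} copy of $SKn_{n',k}$ is resolved for the same reason.

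The gap is in your step (i). The averaging argument you gesture at would need: there exists an arc of length $r=\Theta(\beta n/k)$ whose deletion kills at least $r$ colors. But with only $\alpha\approx\frac{\beta n}{k}$ star-shaped classes (which is all the counting argument guarantees), an arc of length $r$ contains on average only $\frac{r}{n}\cdot\alpha\approx\frac{\beta}{k}\cdot r< r$ of their centers, so by pigeonhole the best arc kills far fewer than $r$ colors --- nowhere near enough to make the parameter $a'=a-(n-n')$ track. (Note also that your parenthetical ``at most $n-n'$ colors can disappear'' has the inequality backwards: the restriction argument requires killing \emph{at least} $n-n'$ colors.) The paper sidesteps this entirely by deleting precisely the centers, which kills one color per deleted vertex by construction. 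Your step (ii) and the appeal to Theorem~\ref{thm-main} are in the right spirit, but they rest on the unsupported combinatorial claim of step (i); replacing the arc idea with the paper's star-shaped counting lemma and scattered-center deletion would repair the argument.
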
 
 
 \begin{proof}

 The following is an easy result, for which we haven't found a formal reference: 
\begin{lemma}
The number of vertices of the stable Kneser graph $SKn_{n,k}$ (i.e. the cardinal of the set ${{n}\choose {k}}_{st}$) is 
$
{{n-k+1}\choose {k}}+{{n-k}\choose {k-1}}$. 
Also, for every $x\in 1,\ldots, n$ the number of sets in ${{n}\choose {k}}_{st}$ containing $x$ is $ {{n-k}\choose {k-1}}$.
\label{cardinal-stable-kneser}
\end{lemma}

\begin{proof} 

To encode a stable set $\{x_1,x_{2},\ldots, x_{k}\}\subseteq [n]$ we have to give: \begin{description}
\item[-] $a_1=x_1$, the distance between point $0(=n)$ and $a_1.$
\item[-] $a_i=x_{i}-x_{i-1}$, that is the distance between $x_{i-1}$ and $x_i.$
\item[-] $a_{k+1}$, defined as the number of positions between point $x_k$ and point $0(=n)$ on the circle, going clockwise. 
\end{description}

We divide the counting of stable sets in two cases: 
\vspace{5mm}

\textbf{Case 1: $a_1=0.$} Then 
$a_{2}, \ldots, a_{k+1}\geq 2$, $a_{2}+\ldots + a_{k+1}=n$. Denoting $b_i=a_i-2$ we get $b_i\geq 0,$ $b_2+\ldots + b_{k+1}=n-2k.$

By Lemma~\ref{one}, the  number of such tuples is ${{n-k-1}\choose {k-1}}$. 

\vspace{5mm}

\textbf{Case 2: $a_1\geq 1.$} Then $a_{2}, \ldots, a_{k}\geq 2$, $a_{k+1}\geq 0$,  $a_{1}+\ldots + a_{k+1}=n$. 

Define $b_1=a_1-1$, $b_2=a_2-2, \ldots, b_k=a_k-2, b_{k+1}=a_{k+1}.$ Thus 
$b_i\geq 0$ and 
$b_{1}+\ldots + b_{k+1}=n-1-2(k-1)$. By Lemma~\ref{one}, the number of such triplets is ${{n-2k+1+(k+1)-1}\choose {k}}= {{n-k+1}\choose {k}}$

As for the second part, 
Similarly to the previous proof, $a_1=1,a_{2},\ldots, a_{k}\geq 2, a_{k+1}\geq 0$.

Denoting $b_1=a_1-1$, $b_i=a_i-2$ for $i=2,\ldots k$ and $b_{k+1}=a_{k+1}$ we get $b_2+\ldots + b_{k+1}=n-2(k-1)-1= n-2k+1$. The number of solutions is ${{n-2k+1+k-1}\choose {k-1}}$

\end{proof}

Call a family of sets \emph{star-shaped} if all set in the family share a fixed element. 
Talbot \cite{talbot-separated} proved the (first part of the) following result: 

\begin{lemma}
Let $C\subseteq {{n}\choose {k}}_{st}$, be a non-star shaped set. Then $
|C|\leq |\{A\in {{n}\choose {k}}_{st}: 1\in A\}|\mbox{  } (= {{n-k}\choose {k-1}} )$, 
with the second line following from Lemma~\ref{cardinal-stable-kneser}. 
\end{lemma} 

We will, however, simulate propositionally a weaker relative of Talbot's theorem:

\begin{theorem}
Let $C\subseteq {{n}\choose {k}}_{st}$ be a non-star shaped set. Then $
|C|\leq k^2{{n+k-1}\choose {k-2}}$. 
\label{upper-schrijver}
\end{theorem} 
\begin{proof}

Suppose $C$ is not star-shaped and nonempty. Let $S_0=\{a_1,a_2,\ldots, a_k\}$ be some fixed set in $C$. Since $C$
is not star-shaped, there must be sets $S_{1},S_{2},\ldots, $ $S_{k}\in C$ with $a_i\not \in S_i$ for $i=1,\ldots, k.$ To specify an element $S$ of $C$ we first specify some $a_i\in S\cap S_0.$ Such an element exists since $S$ and $S_0$ are in the same color class, hence they must intersect. Similarly, $S$ and $S_i$ must intersect, hence they have a common element $a_j$. $a_i\neq a_j$, since $a_j\in S_{i}$ but $a_i\not \in S_i$. 

We will view numbers from $1$ to $n$ on a circle (to make $1$ and $n$ neighbors). Then $a_i$ and $a_j$ are not neighbors in this representation. We prove that the number of stable sets of size $k$ containing $a_i,a_j$ is maximized when $a_i,a_j$ are at distance two from eachother: 
 
\begin{lemma} Given two non-neighboring elements $a,b$, there are at most ${{n+k-1}\choose {k-2}}$ stable sets that contain both elements $a,b.$ The bound is tight, being realized e.g. for two points $a_0=1,b_0=3$ at distance two on the circle. 
\label{stable-two}
\end{lemma}

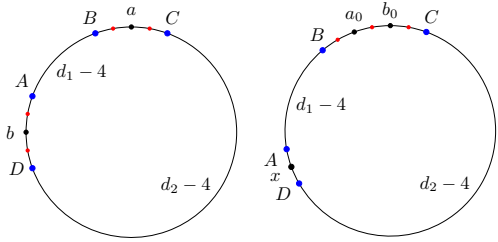
\begin{figure}[h]
\begin{center}
\scalebox{0.7}{

\begin{tikzpicture}
  \draw (-2.5,-.05) node (a) {} 
  arc(0:70:2cm) node (b1) {}
  arc(70:80:2cm) node (r1) {}
  arc(80:90:2cm) node (aa) {}
  arc(90:100:2cm) node (r2) {}
  arc(100:110:2cm) node (b2) {}
  arc(110:125:2cm) node (d1) {}
  arc(125:160:2cm) node (b3) {}
  arc(160:170:2cm) node (r3) {}
  arc(170:180:2cm) node (bb) {}
  arc(180:190:2cm) node (r4) {}
  arc(190:200:2cm) node (b4) {}
  arc(200:330:2cm) node (d2) {}
  arc(330:360:2cm) node (dd) {};
  \foreach \i in {1,2,...,4}{
  		\draw[red,fill=red] (r\i) circle (.03);
  		\draw[blue,fill=blue] (b\i) circle (.05);
  }
  \draw[black,fill=black] (aa) circle (.04); 
  \draw[black,fill=black] (bb) circle (.04); 

\draw(b1) ++(0.1,.3cm) node {$C$};
\draw(b2) ++(-0.1,.3cm) node {$B$};
\draw(b3) ++(-0.2,.3cm) node {$A$};
\draw(b4) ++(-0.3,0cm) node {$D$};

  \draw (aa) ++(0,.3cm) node {$a$};
  \draw (bb) ++(-.3cm,0) node {$b$};
  \draw (d1) ++(0.2,-.5cm) node {$d_1-4$};
  \draw (d2) ++(-.7cm, 0) node {$d_2-4$};
\end{tikzpicture}

}
\hspace{-0.2cm}
\scalebox{0.7}{

\begin{tikzpicture}
  \draw (-2.5,-.05) node (a) {} 
  arc(0:70:2cm) node (b1) {}
  arc(70:80:2cm) node (r1) {}
  arc(80:90:2cm) node (bb) {}
  arc(90:100:2cm) node (r2) {}
  arc(100:110:2cm) node (aa) {}
  arc(110:120:2cm) node (r3) {}
  arc(120:130:2cm) node (b2) {}
  arc(130:155:2cm) node (d1) {} 
  arc(155:190:2cm) node (b3) {} 
  arc(190:200:2cm) node (x) {}
  arc(200:210:2cm) node (b4) {}
  arc(210:330:2cm) node (d2) {}
  arc(330:360:2cm) node (dd) {};

  \foreach \i in {1,2,...,3}{
  		\draw[red,fill=red] (r\i) circle (.03);
  }

  \foreach \i in {1,2,...,4}{
  		\draw[blue,fill=blue] (b\i) circle (.05);
  }
  \draw[black,fill=black] (x) circle (.05);
  \draw[black,fill=black] (aa) circle (.04); 
  \draw[black,fill=black] (bb) circle (.04); 

\draw(b1) ++(0.1,.3cm) node {$C$};
\draw(b2) ++(-0.1,.3cm) node {$B$};
\draw(b3) ++(-0.3,-.2cm) node {$A$};
\draw(b4) ++(-0.3,-.2cm) node {$D$};

  \draw (aa) ++(0,.3cm) node {$a_0$};
  \draw (bb) ++ (0,.3cm) node {$b_0$};
  \draw (x) ++(-.3,-.2cm) node {$x$};
    \draw (d1) ++(0.5,-.35cm) node {$d_1-4$};
  \draw (d2) ++(-.7cm, 0) node {$d_2-4$};
\end{tikzpicture}

}
\end{center}
\caption{The injection in the proof of Lemma~\ref{stable-two}: (a). Initial stable sets. (b). Target stable sets.}
\label{fig:lemma4} 
\end{figure}
\begin{proof} 
We will show the desired maximization result by constructing an injective mapping of stable sets all to the set of stable sets containing $a_0=1,b_0=3$. Indeed, let  $C_{1}$ be an arbitrary stable set containing two fixed integers $a,b$. Let 
$d_1, d_2\geq 2$ be the (left/right) distances between $a,b$ when viewed on a circle. We may assume that $d_1,d_2\geq 3$, otherwise we are counting stable sets in a case isomorphic to those of the image.   Neighbors of $a,b$ are forbidden by the stability constraint from being in the set, hence 

\begin{description}
\item[-] If $d_1= 3$ (or, symmetrically, if $d_2= 3$) then no point on the segment of length three between $a$ and $b$ can be part of the stable set. Thus all the points in the stable set except $a,b$ lie in a circular segment of length $n-7$.  For stable sets containing points $a_0,b_0$, on the other hand, all other points  except $a_0,b_0$ lie in circular segment of length $n-6$, so the inequality is evident. 

\item[-] If $d_1,d_2\geq 4$ then removing $a,b$ and their (forbidden) neighbors creates two circular segments of lengths $d_1-4,d_2-4$, respectively, containing all the points of the stable set, other than $a,b$ (bounded by the blue points in Fig.~\ref{fig:lemma4} (a)). Let $x$ be the unique point at distance $d_1-1$ from $a_0$ in the direction opposite to $b_0$ (Fig.~\ref{fig:lemma4} (b)). If $d_1,d_2\geq 4$ we actually get a bijection between stable sets of size $k$ containing $a,b$ to stable sets of size $k$ containing $a_0,b_0$ but \textbf{not} $x$. The bijection is self-evident: we map bijectively the circle segments  $AB$ ($CD$) in the first figure onto circle segments $AB$ ($CD$) in the second figure,  respectively. 

\end{description} 

All we need now is to count the maximum number of stable sets of size $k$ containing $a_0,b_0$. This is accomplished with the help of the following
\begin{lemma}
In a circular segment having numbers $a,b$ as extremities, the number of stable sets containing exactly $r$ elements is ${{b-a-r+2}\choose {r}}$. 
\label{cnt}
\end{lemma}
\begin{proof}
We can encode a stable set $\{y_1,y_2,\ldots, y_r\}\subseteq \{a,\ldots, b\}$ by a vector $(x_{1},\ldots, x_{r+1})$, where:  
\begin{itemize} 
\item[-] $x_1=y_1-a\geq 0.$ 
\item[-] $x_2=y_2-y_1-2\geq 0.$
\item[-] $x_r=y_r-y_{r-1}-2\geq 0.$
\item[-] $x_{r+1}=b-y_r\geq 0.$
\end{itemize} 
We have $x_1+\ldots + x_{r+1}=b-a-2(r-1).$ The number of vectors of solutions to this equation is ${{b-a-2(r-1)+r}\choose {r}}$. 
\end{proof}

To apply this lemma to our problem, we note that the outer circle segment between points $a_0-2(=n-1)$ and $b_0+2(=5)$ has length $n-5$ and must contain $k-2$ points.

\end{proof}
Using the conclusion of Lemma~\ref{stable-two} proves Theorem~\ref{upper-schrijver}.

\end{proof}

\begin{lemma}
Fix $k>1$ and $0<\beta<1$. Then there exists $N(k,\beta)$ such that for $n>N(k,\beta)$ in any $(n-2k+1)$-coloring of ${{n}\choose {k}}_{st}$ there are at least $\frac{n\beta}{k}$ star-shaped color classes. 
\label{lemma-beta}
\end{lemma}
\begin{proof} 
Suppose $c$ is a $(n-2k+1)$-coloring of ${{n}\choose {k}}_{st}$, and let $\alpha$ be the number of star-shaped classes of $c.$

Any star-shaped class has at most
${{n-k-1}\choose {k-1}}$ elements: this can be seen by applying Lemma~\ref{cnt} to segment $3,n-1$ which has length $n-3$ elements, and we must choose $k-1$ elements, apart from the center, to form a stable set. 

Any non-star-shaped color class has at most $k^2{{n+k-1}\choose {k-2}}$ elements.  
Let $\alpha$ be the number of star-shaped color classes. Then we upper bound the number of vertices in $SKn_{n,k}$ using the two previously-mentioned upper bounds as follows: 

\begin{align*}
\alpha {{n-k-1}\choose {k-1}} + (n-2k+1-\alpha) k^2{{n+k-1}\choose {k-2}}\geq 
{{n-k+1}\choose {k}}+{{n-k}\choose {k-1}}, \mbox{ or }
\end{align*} 

\begin{align*}
\alpha \geq   \frac{{{n-k+1}\choose {k}}+  {{n-k}\choose {k-1}}- (n-2k+1) k^2{{n+k-1}\choose {k-2}}}{({{n-k-1}\choose {k-1}}- k^2{{n+k-1}\choose {k-2}})}
\end{align*} 

The polynomial (in $n$) in the numerator has degree $k-1$ and leading term $\frac{n^{k-1}}{(k-1)!}(1+o(1))$, whereas the one in the denominator has degree $k$ and leading term $\frac{n^{k}}{k!}(1+o(1))$. Hence, for every $0<\beta<1$ there exists $N(k,\beta)$ so that for $n>N(k,\beta)$ we have $\alpha > \frac{n\beta}{k}$. 
\end{proof} 

Choose now $0<\beta<1$.  We will use the following data reduction rule, guaranteed to work for $n>N(k,\beta)$, the constant in Lemma~\ref{lemma-beta}: 
\begin{equation} 
(SKn_{n,k},n-2k+1) \vdash (SKn_{n-\frac{n\beta}{k}}, n(1-\frac{\beta}{k})-2k+1)
\end{equation} 
That is, we will eliminate in one round $\frac{n\beta}{k}$ star-shaped color classes, and equally many colors. This will ensure a data reduction chain of size $O(log_{1+\frac{\beta}{k-\beta}}(n))$. 

All we need is to show that the soundness of this reduction rule can be witnessed by polynomial size Frege proofs. The formalization is essentially similar to the one given in \cite{aisenberg2018short}  establishing quasipolynomial Frege proofs (and, implicitly, polynomial size extended Frege proofs) for the ordinary Kneser formulas. 

\end{proof}

\section{Proof of Theorem~\ref{theorem-dualcol}}

Rule (b). does not apply to unsatisfiable instances of DualCol. Hence we have to argue about the size of Frege proofs witnessing the soundness of rules (a) and (c), namely: 
Let $\Phi_{v,1}[\overline{Y}]$ be the formula $\wedge_{w\neq v} Y_{v,w}$ (informally, $v\in All(G)$). We need to provide proofs that witness that
\[
\Phi(G,k)\wedge \Phi_{v,1}[\overline{Y}] \vdash \Phi(G\setminus \{v\},k-1),\mbox{ and }
\]
\[
\Phi(G,k)\vdash \Phi(G^{\prime},k^{\prime}). 
\]
For the first implication, define new variables $Z_{w,i}$ via the substitution, for $w\neq v\in V(G)$,  $Z_{w,i^{\prime}}\leftrightarrow X_{w,i}\AND X_{v,l}$, where $i^{\prime}=i$ for $i<l$, $i^{\prime}=i-1$ for $i>l$. 

We start by deriving, by resolving unit literals $Y_{v,w}$ (which are part of the formula), for all $w\neq v\in V(G)$ and $i$, clauses $\overline{X_{v,i}}\OR \overline{X_{w,i}}$. Then we derive, for every $w\neq v\in V(G)$ and $i$, clauses $\overline{X_{v,i}}\OR (\OR_{j\neq i}X_{w,j})$. This is done by resolving $(\OR_{j=1}^{k}X_{w,j})$ and $\overline{X_{v,i}}\OR \overline{X_{w,i}}$. We then derive clauses $\overline{X_{v,i}}\OR (\OR_{i^{\prime}=1}^{k-1} Z_{w,i^{\prime}})$. By resolving all these clauses against $\OR_{i=1}^{k} X_{v,i}$ we derive $(\OR_{i^{\prime}=1}^{k-1} Z_{w,i^{\prime}})$. Similar tricks allow deriving clauses $\overline{Z_{w,i}}\OR \overline{Z_{w,j}}$ and  $\overline{Y_{v,w}}\OR \overline{Z_{v,i}}\OR \overline{Z_{w,i}}$ from the corresponding clauses in the $X$ variables. 

As for the second reduction rule, intuitively we want to encode the fact that if a vertex $v\in G^{\prime}$ is colored with color $i$ in $G$, then coloring it with color $i-less(v)$, where $less(v)$ is the number of nodes in $H$ colored with a color smaller than $i$, yields a legal coloring of $G^{\prime}$. This is true since all nodes in $H$ must get colors (in $G$) different from all colors in $G^{\prime}$.

For an arbitrary vertex $v\in G^{\prime}$, let $less(v)$ be the number of nodes $w\in C$ (here $C$ refers to the class of the crown decomposition of $\overline{G}$) such that $col(w)<col(v)$. One can compute the binary representation of number $less(v)$ using Frege proofs as follows: we create a boolean variable $T_{v,w}$ which will be true iff $col(w)<col(v)$. One can compute $T_{v,w}$ as 
\[
T_{v,w}:= \bigvee_{i<j} X_{w,i}\AND X_{v,j}.
\]
Now we simply use the predicate $COUNT(T_{v,w})_{w\in C}$ to compute the binary representation of $less(v)$. Here COUNT is the Buss counting predicate \cite{buss-frege-php}. We will also derive the following formulas: 
\begin{equation} 
\overline{X_{v,i}}\vee \bigvee_{t=1}^{i} [less[v]=t] 
\label{feq}
\end{equation} 
To accomplish that, we use the pigeonhole principle $PHP_{i}^{i+1}$ to  prove that 
\begin{equation} 
\overline{X_{v,i}}\vee [COUNT((X_{w,j})_{w\in C,j<i})\leq i]
\label{eq-aux1}
\end{equation}

Indeed, assuming $X_{v,i}=TRUE$ we can derive any disjunction of length $i+1$ consisting of literals of type $\overline{X_{w,j}}$,  with $w\in C$, $j<i$. This is because for all $w_1\neq w_2\in C$, $k_1\neq k_2$  $\overline{X_{w_1,k_1}}\OR \overline{X_{w_2,k_1}}$  and $\overline{X_{w_1,k_1}}\OR \overline{X_{w_1,k_2}}$ are clauses of $\Phi(G,k)$. By Lemma~\ref{foo-size} we can derive equation~(\ref{eq-aux1}). Next, simple arguments along the lines of \cite{buss-frege-php} establishes the equivalence between formulas $U\leq i$ and $\bigvee_{k=1}^{i} [U=k].$  Here $U$ is a bit vector of appropriate length to represent $i$. We use~(\ref{eq-aux1}) and this to derive~(\ref{feq}). 

Now, for every $v\in V(G^{\prime})=R$ we define a new variable $Z_{v,i}$, designed to be true iff the color of $v$ in the induced coloring on $G^\prime$ is $j$. We will enforce this by making the substitutions
\begin{equation} 
Z_{v,j}:=\bigvee_{j=1}^{i} X_{v,i}\AND [less(v)=i-j]
\end{equation}
First note that $Z$ respects the color classes of $G$: if $v_1,v_2$ have the same color in $G$ then they have the same color in $G^{\prime}$. Furthermore, the substitution does not collapse two different color classes of $G$ into a single color class in $G^{\prime}$: it simply relabels the colors of vertices in $G^{\prime}$ with elements of $1,2,\ldots, k^{\prime}$. Therefore, if $Z_{v_1,j}=Z_{v_2,j}=TRUE$ then there exists an unique $i_0$ such that $X_{v_1,i_0}=X_{v_{2},i_0}=TRUE$. 
 
We need to derive clauses $\bigvee_{t=1}^{k^{\prime}} Z_{v,t}$ as well as, for $vw\in E(G^{\prime})$, $\overline{Z_{v,j}}\OR \overline{Z_{w,j}}$. Deriving the first type of clauses is easy: we use formulas~(\ref{feq}) and $X_{v,1}\OR X_{v,2}\OR \ldots \OR X_{v,k}$. 

As for the second one, nota that all 
 clauses $\overline{X_{v,i}}\OR \overline{X_{w,i}}$ are part of $\Phi(G,k)$. Given the observation we made above and this fact, assuming $Z_{v,j}=Z_{w,j}=TRUE$ we can derive a contradiction. By Lemma~\ref{foo-size} we can, therefore, derive (with the same complexity) clause $\overline{Z_{v,j}}\OR \overline{Z_{w,j}}$.

\section{Proof of Theorem~\ref{theorem-vc}}

We use the predicate $COUNT^{n}_{k}(x_{1},x_{2},\ldots, x_{n})$ from \cite{buss-frege-php}. Formula  \\ $COUNT^{n}_{k}(x_{1},x_{2},\ldots, x_{n})$ is TRUE if and only if at least $k$ of the variables $x_{1},x_{2},\ldots, x_{n}$ are true. For every fixed $k$, $COUNT^{n}_{k}$ can be computed by polynomial size Frege proofs. 

We will define a sequence of formulas: 
\begin{enumerate} 
\item For $v\in V$, $\Phi_{v,1}(\overline{Y})= COUNT_{k}^{n-1}((Y_{v,w})_{w\neq v\in V})$. Informally, $\Phi_{v,1}$ is true in graph $G$ iff the degree of $v$ is at least $k$. 
\item For $v\in V$, $\Phi_{v,2}(\overline{X},\overline{Y})= (\bigwedge\limits_{i=1}^{k} \overline{X_{v,i}})\wedge \Phi_{v,1}(\overline{Y})\wedge \Phi_{VC}(G,k)[\overline{X},\overline{Y}]$. 
\end{enumerate} 

For every neighbor $w$ of $v$, by resolving $Y_{v,w}$ with clause $\overline{Y_{v,w}}\vee X_{v,1}\vee \ldots \vee X_{v,k}\vee X_{w,1}\vee \ldots \vee X_{w,k}.$ of $\Phi_{v,2}$ we derive clause $X_{v,1}\vee \ldots \vee X_{v,k}\vee X_{w,1}\vee \ldots \vee X_{w,k}.$ By resolving successively with $\overline{X_{v,1}} ,\ldots, \overline{X_{v,k}}$ we derive clause $X_{w,1}\vee \ldots \vee X_{w,k}.$

Formula $\bigwedge\limits_{w\in N(v)} (X_{w,1}\vee \ldots \vee X_{w,k})$ is isomorphic to the Pigeonhole Principle $PHP_{|N(v)|}^{k}$ which  has polynomial-size Frege refutations \cite{buss-frege-php}. Plugging in this proof of this statement into our argument, we conclude that that the implication $\Phi_{v,2}(X,\overline{Y})\vdash \square$ can be witnessed by polynomial size Frege proofs, hence, by Lemma~\ref{foo-size}, so does the implication $\Phi_{v,1}(\overline{Y})\wedge \Phi_{VC}(G,k)[\overline{X},\overline{Y}]\vdash \bigvee\limits_{i=1}^{k} X_{v,i}$. 

As for the second reduction rule, it is just as easy: for every vertex $v\in V$ which is isolated and every $i=1,\ldots, k$, we first derive by resolution (using negative clauses $\overline{Y_{v,w}}$ and clause $\overline{X_{v,i}}\vee (\bigvee_{w\neq v} Y_{v,w})$ unit clauses $\overline{X_{v,i}}$. We then use these clauses to resolve away every other occurrence of $X_{v,i}$ from the formula, obtaining a formula isomorphic to $\Phi(G\setminus Isolated(G),k)$. 



\section{Proof of Theorem~\ref{thm-edgecover}}

The soundness of the first reduction rule, $\Phi(G,k)\vdash \Phi(G\setminus Isolated(G),k)$ can be witnessed by efficient Frege proofs similar to those for the vertex cover problem. 

As for the second rule, the formula $$\Xi_{S}(G):= \wedge_{w,v\in S}\wedge_{r\in V} (Y_{v,r}\leftrightarrow Y_{w,r})$$ (where, of course, $A\leftrightarrow B$ can be equivalently rewritten as $(\overline{A}\vee B)\wedge (A\vee \overline{B})$) expresses the fact that $N[v]=N[w]$ for all $v,w\in S$. So we need to prove the soundness of the rule 
\begin{equation} 
\Phi(G,k)\wedge \Xi_{S}(G)\vdash \Phi(G^{\prime},k^{\prime}). 
\end{equation} 
Without loss of generality we will only deal with the case $N[v]\neq \emptyset$ for all $v\in S$, (the other case,  $N[v]= \emptyset$ for all $v\in S$, can be handled with minor modifications to this argument). By slightly abusing notation, we will denote by $S$ the vertex of $G^{\prime}$ obtained by contraction. Let $s \in S$ be an arbitrary vertex. 

We define substitutions: $Y_{v,w}^{\prime}:= Y_{v,w}$ for all $v,w\in G^{\prime}$, $v,w\neq S$. If, say, 
$v=S$ we define $Y_{S,w}^{\prime}:=Y_{s,w}$. Also define $X^{\prime}_{v,i}:=X_{v,i}$ for $v\neq S$, $X^{\prime}_{S,i}:=X_{s,i}$. The substitution yields a formula isomorphic to $\Phi(G^{\prime},k^{\prime})$, and the proof of the safety is basically trivial. 

To obtain the result note that the number of vertices goes down geometrically, by a ratio of $1-\frac{1}{2^{k}}$ at each step. 

\subsection{Proof of Lemma~\ref{safe-ecc}}

Let $G$ be a graph to which rules (a). (b). do not apply and which has an edge clique cover of size $k$. 
Consider an encoding $b(v)$ of every vertex $v$ on $k$ bits such that for every $v\in V$, $b(v)$ is a bit vector whose $i$'th bit is one iff $v$ is a part of the $i$'th clique. 

There must be a set of vertices $S\subset V$, $|S|\geq \frac{n}{2^k}$ such that for all $u,v\in S$, $b(u)=b(v)=b$, for some $b\in \{0,1\}^{k}$. 

If $b=0^{k}$ then, since every edge in $G$ must be covered by one of the $k$ cliques, it follows that every $v\in S$ is an isolated vertex. Hence rule (a). applies. 

If, on the other hand $b\neq 0^{k}$, say $b_{i}\neq 0$, then every $v\in S$ must belong to the $i$'th clique. Hence $S$ induces a clique in $G$.

\section{Proof of Theorem~\ref{theorem-hittingset}}

Define for $i\in U^{\prime}$ substitutions $X_{i,j}^{\prime}:=X_{i,j}$. We need to 
\begin{description} 
\item[-] derive clause $(\vee_{i\in Y}(\vee_{j=1,\ldots k} X_{i,j}))$. 
\item[-] for every $j=1,\ldots, k$, derive clauses $\vee_{i\in U^{\prime}} X_{i,j}$
\end{description} 
For the first clause we show that $\Phi(U,\mathcal{A},k) \wedge (\wedge_{i\in Y}(\wedge_{j=1,\ldots k} \overline{X_{i,j}}))\vdash \emptyset$ and then we invoke Lemma~\ref{foo-size}. To do that we first derive, for $l=1,\ldots, k+1$ (by resolving literals $\overline{X_{i,j}}$) clauses $(\vee_{i\in S_{l}\setminus Y}(\vee_{j=1,\ldots k} X_{i,j}))$. 

Substituting each variable $X_{i,j}$, where $i\in S_{j}\setminus Y$ to a new variable $Y_{i,j}$ 
yields a formula isomorphic to $PHP_{k+1}^{k}$ which has polynomial size Frege proofs \cite{buss-frege-php}. Putting all these things together we get polynomial-size Frege proofs witnessing the soundness of one step of the data reduction. 

The number of clauses drops at every reduction step by $k$, so the length of the data reduction chain is $O(n^d/k)$. 

\section{Proof of Theorem~\ref{theorem-arrow}}

First, a note about the length of the reduction chains. Formula $Arrow_{m,n}$ has $(m!)^{n+1}$ variables, all of them appearing explicitly in the formula. But $n=O(log( (m!)^{n+1}))$, so indeed a reduction chain of length $O(n)$ has length logarithmic in $|Arrow_{m,n}|$.

\subsection{Proof of Lemma~\ref{arrow-red}}

Suppose that $a,a^{\prime}\in [m]_{-B}$ and $a<_{R_{i}} a^{\prime}$ for all $i\in [m]_{-B}$. Then $a<_{R_{i}^{+B}} a^{\prime}$. By unanimity of $W$, $a<_{W(R_{1}^{+B},\ldots, R_{n}^{+B})} a^{\prime}$.
Since $a,a^{\prime}$ were arbitrary, it follows that $W_{-B}$ is unanimous.  

As for IIA, let $a,a^{\prime}\in [m]_{-B}$ and $(R_1,R_2,\ldots, R_n)$ and $(R_1^{\prime},R_2^{\prime},\ldots, R_n^{\prime})$ be preference profiles such that, for every $i=1,\ldots, n$, $R_i$ and $R_{i}^{\prime}$ agree with respect to the relative ordering of $a,a^{\prime}$. Then for every $i=1,\ldots, n$, $R_i^{+B}$ and $R_{i}^{\prime,+B}$ agree with respect to the relative ordering of $a,a^{\prime}$. By the IIA axiom for $W$, $W(R_1^{+B},R_2^{+B},\ldots, R_n^{+B})$ and $W(R_1^{\prime,+B},R_2^{\prime,+B},\ldots, R_n^{\prime,+B})$ agree with respect to the relative ranking of $a,a^{\prime}$. Hence so do $W_{B}(R_1,R_2,\ldots, R_n)$ and $W_{B}(R_1^{\prime},R_2^{\prime},\ldots, R_n^{\prime})$. 

\subsection{Propositional simulation of the reduction rules}
\begin{description} 
\item[First reduction rule: ] 
Define, for $Q\subseteq [m]$, $|Q|=m-5$, and $i=1,\ldots, n$ formulas
\[
Nondict_{-Q,i}:=\bigvee_{R\in \mathcal{R}_{-Q}} \overline{X_{R^{+Q},R^{+Q}_{i}}}
\]
(informally, formula $Nondict_{-Q,i}$ is true iff $i$ is not a dictator for $W_{-Q}$). 
\[
Unanimous_{-Q}:=\bigwedge_{a,b\in [m]_{-Q}} \bigvee_{\stackrel{R\in \mathcal{R}_{a,b,-Q}}{\pi\in S^{m}_{a,b}}} X_{R^{+Q},\pi^{+Q}}. 
\]
\[
IIA_{-Q}:= \bigwedge_{(R,R^{\prime},\pi,\pi^{\prime})\in \mathcal{R}_{-Q}} (\overline{X_{R^{+Q},\pi_{1}^{+Q}}}\OR \overline{X_{R^{\prime,+Q},\pi_{2}^{+Q}}})
\]
(where, for simplicity, we have ommitted the IIA restrictions on $R,R^{\prime},\pi_1,\pi_2$, see Definition~\ref{def-arrow}). Note that $Arrow_{m,n}=Unanimous_{-\emptyset}\AND IIA_{-\emptyset} \AND \bigwedge_{i=1}^{n} Nondict_{-\emptyset,i}$. 

We will prove that 
\begin{equation}
Arrow_{m,n}\AND \bigwedge_{i=1}^{n} Nondict_{-[6:m],i} \vdash Arrow_{5,n}
\label{arrow-1} 
\end{equation} 
and, for $i=1,\ldots, n$
\begin{equation}
Arrow_{m,n}\AND \bigvee_{i=1}^{n} \bigwedge_{R\in \mathcal{R}_{-[6:m]}} X_{R^{+[6:m]},R^{+[6:m]}_i} \vdash Arrow_{5,n}
\label{arrow-2} 
\end{equation} 
Employing tautology 
\begin{equation}
(\bigwedge_{i=1}^{n} Nondict_{-[6:m],i}) \OR \bigvee_{i=1}^{n} (\bigwedge_{R\in \mathcal{R}_{-[6:n]}} X_{R^{+[6:m]},top(R^{+[6:m]}_i)})
\label{arrow-3} 
\end{equation} 
and substitutions implicit in~(\ref{arrow-1}) and~(\ref{arrow-2}) we conclude that $Arrow_{m,n} \vdash P_1 \OR P_2$, where both $P_1,P_2$ are formulas isomorphic to $Arrow_{5,n}$. Thus we are in the framework of our metatheorem with $R=2$. Note that we do not need to prove (7) for \emph{all} $i$, but only for one, the one that is a dictator. 

We need to specify the substitutions implicit in~(\ref{arrow-1}) and~(\ref{arrow-2}). First, formalizing the mathematical argument in Lemma~\ref{arrow-red} we show that $Arrow_{m,n}\vdash IIA_{-Q}\AND Unanimous_{-Q}$. The propositional content of this implication is trivial: the clauses of $Unanimous_{-Q}$ and $IA_{-Q}$ are simply subclauses of $Arrow_{m,n}$. 

Because of this, the substitution witnessing implication~(\ref{arrow-1}) is quite simple: it replaces a restricted variable $X_{R,\pi}^{\prime}$ of $Arrow_{5,n}$ with the variable $X_{R^{+[6;m]},\pi^{+[6:m]}}$ of $Arrow_{m,n}$. 

As for~(\ref{arrow-2}), define, for all $c< d\in [5]$ and $i=1,\ldots, n$, formulas
\begin{equation} 
Witness_{c,d,i}:= \bigvee_{\stackrel{R\in \mathcal{R}:R_{i}\in S_{c,d}^{m}}{\pi \in S_{d,c}^{m}}} X_{R,\pi}
\end{equation} 
Informally, formula $Witness_{c,d,i}$ is true when pair $(c,d)$ acts as a witness that agent $i$ is not a dictator for $W$, since $c<_{R_i} d$ but $d<_{W(R)}c$. 

A next step is to prove that $$Arrow_{m,n}\AND \bigwedge\limits_{R\in \mathcal{R}_{-[6:m]}} X_{R^{+[6:m]},R^{+[6:m]}_i} \vdash  \bigvee\limits_{c<d\in [5]} Witness_{c,d,i}.$$ This is easy: we use literals $X_{R^{+[6:m]},R^{+[6:m]}_i}$ to prove (by resolution) 
\begin{equation} 
Arrow_{m,n}\AND \bigwedge\limits_{R\in \mathcal{R}_{-[6:m]}} X_{R^{+[6:m]},R^{+[6:m]}_i} \AND \bigwedge \limits_{c<d\in [5]} \bigwedge\limits_{\stackrel{R\in \mathcal{R}:R_{i}\in S_{c,d}^{m}}{\pi \in S_{d,c}^{m}}} \overline{X_{R,\pi}}\vdash \square
\end{equation} 
(the last conjunction negates formulas $Witness_{c,d,i}$) and then invoke Lemma~\ref{foo-size} to get a proof of the same length of the implication we claimed. 

Now we prove, for all $j=1,\ldots, n$ that 
\[
Arrow_{m,n}\AND Witness(c,d,i)\vdash Nondict_{-V,j}, 
\]
where $V$ is defined as in the proof of Lemma~\ref{safe-a}. 

The proof of the implication for $j=i$ uses unit literal $X_{R,R_j}$  (negation of one from $Nondict_{-V,i}$) and $\overline{X_{R,\pi}}\OR \overline{X_{R_{-V}^{+V},\pi_2}}$ (part of the IIA part of $Arrow_{m,n}$) to derive clauses $\overline{X_{R_{-V}^{+V},\pi_2}}$ for all $\pi_2$ that rank $c,d$ in a different way than $\pi$. Resolving away these literals from clause $\bigvee\limits_{\pi\in S_m} X_{R^{+V}_{-V},\pi}$ (part of $Arrow_{m,n}$) derives clause $Nondict_{-V,i}$. 

As for the case $j\neq i$, we want to show that 
\begin{equation} 
 Arrow_{m,n} \AND Witness(c,d,i)\AND \bigwedge\limits_{R\in \mathcal{R}_{[5]}} X_{R^{+[6:m]},R^{+[6:m]}_j}\vdash \square
 \label{der-nondict} 
 \end{equation} 
 By Lemma~\ref{foo-size} this will imply $Nondict_{-V,j}$. 

Let $R=(R_1,R_2,\ldots, R_{n})$ be a profile on $[5]$ such that $a<_{R_1}b$ but $b<_{R_2}a$ ($a,b\in [5]$ are defined as in Lemma~\ref{safe-a}). 

Combining the derivations of $Nondict_{-V,j}$ in~(\ref{der-nondict}) with the ones of \\ $Unanimous_{-V}$ and $IIA_{-V}$ (outlined before), plus a bijective identification of of $[m]_{-V}$ and $[5]$ yields a substitution that proves $Arrow_{5,n}$, completing the proof of~(\ref{arrow-2}). 

\item[Second reduction rule:] We refer to Lemma 2 of \cite{tang2009computer} for (mathematical) details of the reduction.  What is important is that the soundness of the statement that at least one of $W_{1,2},W_{2,3},W_{1,3}$ is non-dictatorial is established by a case-by-case analysis. It is first proved that it cannot be that all these functions have the same dictator. Then it is established that if $i$ is the dictator of $W_{1,2}$, $j$ the dictator of $W_{1,3}$, $k$ the dictator of $W_{2,3}$ then $i\in \{2,3\}$, $j\in \{2,3\}$, $k\in \{1,3\}$. For all eight possible cases for triplets $(i,j,k)$ we obtain a contradiction: either we explicitly provide a profile $R$ showing that triplet $(i,j,k)$ cannot represent the set of dictators for the three function, or we employ an argument similar to the one in the case $i=j=k$.  
\end{description} 
\section{Proof of Theorem~\ref{theorem-gs}}

First, it is not obvious that, as formulated in the paragraph, $W_{-B}$ is well-defined. The reason is that $W_{-B}(R)$ invokes $W$ on profile $R^{+B}$, and it is not obvious that if $R$ is a profile on $[m]_{-B}$ the the outcome of $W$ is an element of $[m]_{-B}$, as needed by the definition. 

Suppose that $W_{-B}(R)=W(R^{+B})\in B$ for some profile $R$ on $[m]_{-B}$. We claim that $W(S)\in B$ for every profile $S=(S_1,S_2,\ldots, S_n)$, contradicting the hypothesis that $W$ is onto.  Indeed, if $W_{-B}(R)=W(R^{+B}) \in B$ then $W_{-B}(R_{-1},S_1)=W(R_{-1}^{+B},S_1^{+B})\in B$, otherwise agent 1 would have an opportunity to manipulate at profile $R^{+B}$ by misrepresenting its preference as $S_{1}^{+B}$. Applying this argument inductively for agents $2,3, \ldots, n$ (replacing $R_i$ by $S_i$) we infer that $W(S)\in B$, which is what we claimed. 

\subsection{Proof of Lemma~\ref{gs-red}}

Suppose there exists some profile $R$ and agent $i\in[n]_{-B}$ such that $i$ could manipulate $W_{-B}(R)$ by misrepresenting its profile as $R^{\prime}_{i}$. This means that $i$ could manipulate $W$ on profile $R^{+B}$ by misrepresenting its profile as $R^{\prime,+B}_{i}$, contradicting the fact that $W$ is strategy-proof. Hence $W_{-B}$ is strategy-proof. 

Suppose now that $a\in [m]_{-B}$. Since $W$ is onto, there must exist a profile $R$ such that $W(R)=a$. 
Consider the profile $R^{\prime}_{i}$ that modifies $R$ by moving $a$ to the top of preference profile $R_i$. We claim that $W(R^{\prime}_{i})=a$. Indeed, if this was not the case then $i$ could manipulate on profile $R^{\prime}_{i}$ by misrepresenting its preferences. Continuing the argument inductively for all agents we infer that if $\overline{R}$ is the profile that modifies $R$ by moving $a$ to the top of all profile preferences then $W(\overline{R})=a$. This means that $W$ is unanimous, hence $W_{-B}$ also is. But then there is a profile $\underline{R}$ such that $W_{-B}(\underline{R})=a$: simply make $a$ the top of all profiles.  Since $a$ was arbitrary, it follows that $W_{-B}$ is onto.   

\subsection{Proof of safety of reduction rule a.} 
\label{sect} 

We show that there exists $T\subseteq [m]$, $|T|=m-3$ such that $W_{-T}$ is non-dictatorial. Together with Lemma~\ref{gs-red} this establishes the safety of rule a. 

\textbf{Step 1}. Given two different sets $T_1,T_2$ of size $m-3$, $|T_1\cap T_2|=m-4$ (in other words, $|\overline{T_1}|=|\overline{T_2}|=3$, $|\overline{T_1}\cap \overline{T_2}|=2$), we show that functions $W_{-T_1},W_{-T_2}$ must have the same dictator, if they have one. 

Suppose, indeed, that $W_{-T_1}$ has dictator $i$, $W_{-T_2}$ has dictator $j\neq i$. 
Let $d\in T_2\setminus T_1$, $c\in T_1\setminus T_2, a,b\in \overline{T_1}\cap \overline{T_2}$. Define profiles 
\[
R_s= a<b< \ldots < sorted(T_2) < c, \mbox{ for }s\neq j
\]
\[
R_j= b<a< \ldots < sorted(T_2) < c. 
\]
$W(R)=a$, since $i$ is a dictator for $W_{-T_1}$. Now, if we replace $R_i$ by 
\[
R_{i}^{\prime}=a<b< \ldots < c < sorted(T_2). 
\]
obtaining profile $R^{\prime}$, then $W(R^{\prime})=a$, otherwise agent $i$ could manipulate by reporting profile $R_i$ instead. We continue changing iteratively profiles $R_s, s\neq j$ to $R_s^{\prime}=a<b< \ldots < c< sorted(T_2)$, one profile at a time, until all profiles $R_s$ except $R_j$ have been replaced by $R_s^{\prime}$. Call this profile $\underline{R}$. That is 
\[
\underline{R}_s= a<b< \ldots < c < sorted(T_2) \mbox{ for }s\neq j
\]
\[
\underline{R}_j= b<a< \ldots < sorted(T_2) < c. 
\]

Since no agent $s\neq j$ had an opportunity to manipulate, it must be that $W(\underline{R})=a$. Consider now profile $\underline{R}_j^{\prime}= b<a< \ldots < c < sorted(T_2)$. Since $W_{-T_2}$ has agent $j$ as a dictator, $$W(\underline{R}_1,\ldots, \underline{R}_{j}^{\prime},\ldots, \underline{R}_{n})= W_{-T_2}(\underline{R}_{1,-T_2},\ldots, \underline{R}^{\prime}_{j,-T_2}, \underline{R}_{n,-T_2})=b.$$ So agent $j$ has an opportunity to manipulate at profile $(\underline{R}_1,\ldots, \underline{R}_{j}^{\prime},\ldots, \underline{R}_{n})$ by reporting instead $\underline{R}_{j}^{\prime}$.

\textbf{Step 2}. Either there exists a set $T$ of size $m-3$ such that $W_{-T}$ is not dictatorial, or all functions $W_{-T},|T|=m-3$ must have the same dictator. Indeed, we can "interpolate" between any two sets of cardinality $m-3$ by a sequence of sets falling under step 1. 

\textbf{Step 3}. We show that it is not possible that all functions $W_{-T}$, $|T|=m-3$ have the same dictator $i$. Since $W$ is not dictatorial, there exists a profile $R$ such that $b=W(R)$ is different from $a=top(R_i)$. Let $T\subseteq [m], |T|=m-3,a,b\not \in T$ and consider a profile $R^{\prime}$ that modifies $R$ by moving $b,sorted(T)$ to the bottom of all preferences (in this order), that is $R^{\prime}=(R^{+b}_{-b})^{+sorted(T)}_{-sorted(T)}$.

We have $W(R^\prime)=W_{-T}(R_{-T}^{\prime})=top(R_{i}^{\prime})=a$, since $W_{-T}$ has $i$ as dictator and $top(R^{\prime}_{i})=a$. Let us create a path between $R$ and $R^{\prime}$ by changing one profile $R_{s}$ at a time to $R_{s}^{\prime}$, the last move being $R_i$. 

The value of $W$ does not change at any step $s$, since $W_{-T}$ has $i$ as dictator, and the relative orders of elements in $[m]_{-T}$ does not change at any profile as a result of a change $R_s\rightarrow R_{s}^{\prime}$, $s\neq i$, or agent $s$ would have an opportunity of manipulation at one of the two profiles, using $R_s,R_{s}^{\prime}$, whichever yields a result ranked lower in $R_s,R_{s}^{\prime}$. But this yields a contradiction, since $W(R)=b$ and $W(R^{\prime})=a\neq b.$ 

\subsection{Propositional simulation of the reduction rules}

\textbf{First reduction rule}

Define, for $Q\subseteq [m]$, $|Q|=m-5$, and $i=1,\ldots, n$ formulas
\[
Nonmanip_{-Q,\pi,i}:=\bigwedge_{\stackrel{R\in \mathcal{R}_{-Q}}{o\in  [m]_{-Q}}} \overline{X_{R^{+Q},o}} \OR (\bigvee_{o^{\prime}\in pr(i,o,R^{+Q})} X_{s(i,R^{+Q},\pi),o^{\prime}})
\]

Just as in the case of the first reduction rule of Arrow's theorem, proving propositionally that $W$ is non-manipulable implies that $W_{-Q}$ is non-manipulable is easy, since clauses of $Nonmanip_{-Q,\pi,i}$ are a subset of those in $Nonmanip_{-\emptyset,\pi,i}$, which are part of $GS_{m,n}$. 

Similarly, define, for $Q\subseteq [m]$, $|Q|=m-5$, and $i=1,\ldots, n$ formulas
\[
Onto_{-Q}:=\bigwedge_{o\in  [m]_{-Q}} \bigvee_{R\in \mathcal{R}_{-Q}} X_{R^{+Q},o}. 
\]
$Onto_{-\emptyset}$ is part of $GS_{m,n}$.  
We are left to giving details about proving that $Onto_{-Q}$ follows from $GS_{m,n}$, and about the 
propositional formalization of the soundness of reduction rule a in section~\ref{sect}. We defer posting these details to the final version of this paper, to be posted online. 

\textbf{Second reduction rule} 

We refer to Lemma 2 of \cite{tang2008computer} for (mathematical) details of the reduction.  It is first proved that if $i$ is the dictator of $W_{1,2}$, $j$ the dictator of $W_{1,3}$, $k$ the dictator of $W_{2,3}$ then $i=2$, $j=3$, $k=3$. 

Propositionally, this amounts to proving that for every $r\neq 2$, 
\begin{equation} 
\bigvee_{R\in \mathcal{R}_{[2]}}  \overline{X_{R,top(R_r)}}
\end{equation} 
and similarly for functions $W_{1,3},W_{2,3}$. 

Indeed, $Nondict_{\emptyset,r}$ is part of $Arrow_{m,n}$. 
For any literal $\overline{X_{R,top(R_r)}}$ part of $Nondict_{\emptyset,r}$, we resolve it against clause 
$\bigvee_{o\in [m]} X_{R,o}$  to derive $\bigvee_{o\in [m],o\neq top(R_r)} X_{R,o}$. 

For every literal $X_{R,o}$ in this clause, we resolve it against the formula \\ $Nonmanip_{-\emptyset,\pi,1}$ to derive $\bigvee_{o^{\prime}\in pr(1,o,R)} X_{s(1,R,\pi),o^{\prime}}$. Now take 
$\pi=(R_1)_{-o}^{o+}$. Resolving literal $X_{s(1,R,(R_1)_{-o}^{o+}),o^{\prime}}$ against clause 
  $\overline{X_{s(1,R,(R_1)_{-o}^{o+}),o^{\prime}}}\OR \overline{X_{s(1,R,(R_1)_{-o}^{o+}),o^{\prime}}}$
(part of $GS_{m,n}$) 

Finally, the proof in \cite{tang2008computer} proceeds to obtain a contradiction from the claim that $2$ is the dictator of $W_{1,2}$, $3$ is the dictator of $W_{1,3}$, $W_{2,3}$ by constructing a new profile $R^{\prime}$ and exibiting, for every possible value of $W(R^{\prime})$ a contradiction to manipulability. This can be simulated propositionally: each case amounts to a literal in a disjunction of clause $\bigvee_{\lambda\in [m]} X_{R^{\prime},\lambda}$ of $GS_{m,n}$. Each assumption can be simulated by a resolution proof, and to obtain a resolution proof of the contradiction we use Lemma~\ref{foo-size} to derive negations of these literals, and then obtain a contradiction by resolving all these negated literals against clause $\bigvee_{\lambda\in [m]} X_{R^{\prime},\lambda}$. 
\end{document}